\DeclareMathOperator{\im}{Im}
\DeclareMathOperator{\re}{Re}
\DeclareMathOperator{\diag}{diag}
\DeclareMathOperator{\transpose}{T}
\DeclareMathOperator{\sign}{sgn}
\newcommand{\inner}[1]{\ensuremath{\stackrel{\circ}{#1}}}
\newtheorem{theorem}{Theorem}
\newtheorem{lemma}[theorem]{Lemma}
\newtheorem{proposition}[theorem]{Proposition}
\theoremstyle{definition}
\theoremstyle{remark}
\newtheorem{remark}[theorem]{Remark}
\numberwithin{equation}{section}
\numberwithin{theorem}{section}
\DeclareMathOperator{\Ang}{\textnormal{Ang}}
\def\barroman#1{\sbox0{#1}\dimen0=\dimexpr\wd0+1pt\relax
  \makebox[\dimen0]{\rlap{\vrule width\dimen0 height 0.06ex depth 0.06ex}%
    \rlap{\vrule width\dimen0 height\dimexpr\ht0+0.03ex\relax 
            depth\dimexpr-\ht0+0.09ex\relax}%
    \kern.5pt#1\kern.5pt}}
\definecolor{verylightgray}{rgb}{.9,.9,.9}
\newdimen \arrowsize
\title{Critical behavior in Angelesco ensembles}
\author{K. Deschout and A.B.J. Kuijlaars}
\date{\today}
\begin{document}

\maketitle

\begin{center}
    Department of Mathematics, Katholieke Universiteit Leuven, \\
    Celestijnenlaan 200B, 3001 Leuven, Belgium, \\
klaas.deschout\symbol{'100}gmail.com, \
arno.kuijlaars\symbol{'100}wis.kuleuven.be
\end{center}

\begin{abstract}
We consider Angelesco ensembles with respect to two modified Jacobi weights on 
touching intervals $[a,0]$ and $[0,1]$, for $a < 0$.  As $a \to -1$ the particles 
around $0$ experience a phase transition.  This transition is studied in a double 
scaling limit, where we let the number of particles of the ensemble tend to infinity 
while the parameter $a$ tends to $-1$ at a rate of $\mathcal{O}(n^{-1/2})$.  The 
correlation kernel converges, in this regime, to a new kind of universal kernel, 
the Angelesco kernel $\mathbb{K}^{\Ang}$.  The result follows from the Deift/Zhou 
steepest descent analysis, applied to the Riemann-Hilbert problem for multiple orthogonal polynomials.
\end{abstract}

\section{Introduction and statement of results}

Multiple orthogonal polynomial (MOP) ensembles \cite{kui10} form an extension of the more 
familiar orthogonal polynomial (OP) ensembles \cite{kon04}.  The latter appear as 
the eigenvalue distributions of unitary random matrix ensembles.

A major topic of interest in random matrix theory deals with the universality conjecture.  
This conjecture claims that, as the size of the matrices tends to infinity, the local 
eigenvalue statistics of a random matrix ensemble converge to universal limits, i.e., 
limits that are independent of the precise probability distribution on the matrices. The
limits only depend on macroscopic eigenvalue characteristics and on the symmetry class
of the matrix ensemble.

For the class of unitary random matrix ensembles the problem concerning universality can 
be translated into a problem concerning the associated sequence of orthogonal polynomials (OP).  
By the Gaudin-Mehta method \cite{game60, gau61, meh60} the eigenvalue density function can be 
written as a determinantal point process.  The relevant correlation kernel is the OP kernel, 
which is also known as the Christoffel-Darboux kernel.  Universality is then expressed as the 
convergence of this OP kernel to a certain universal limiting kernel.  This convergence can 
be derived by considering the large degree asymptotics of the OP.  Common limiting kernels 
include the sine, Airy and Bessel kernels.

For other random matrix ensembles a connection exists with multiple orthogonal polynomials (MOP): 
the eigenvalues are a determinantal point process with the so-called MOP kernel
as correlation kernel.  
A prime example of such a connection occurs in the unitary random matrix model with external source.  
By analysis of the MOP kernel in this model universality has been shown and the 
sine and Airy kernels appeared \cite{abk05, blku04b}.  However, in a critical case this model 
also exhibits behavior that does not appear in OP ensembles: under certain scaling 
the MOP kernel converges to the so-called Pearcey kernel \cite{blku07, brhi98, trwi06}.  
MOP ensembles also gave rise to new critical behavior in the papers \cite{duge11, kmfw11}.

This is a recurring feature: MOP ensembles can exhibit a wider variety of critical phenomena 
than OP ensembles.  In this paper we consider another type of MOP ensembles, the 
so-called Angelesco ensembles.  These ensembles do not appear in a natural way as the 
eigenvalue process of a random matrix ensemble, but the notion of universality still makes sense.  
We show that a new kind of critical behavior occurs in Angelesco systems, determined by 
a new kind of universal kernel, that we call the Angelesco kernel $\mathbb{K}^{\Ang}$.

An Angelesco system of weights \cite{ang19,niso91} is a system of $r \geq 2$ weights 
$\vec{w} = (w_{1}, \mathellipsis, w_{r})$ on the real line such that the 
supports of the weights  are contained in intervals with pairwise disjoint interiors.
Thus there exist bounded real intervals $\Delta_{1}, \dots, \Delta_{r}$ such that 
\begin{equation} \label{eq:Deltaj}
 \inner{\Delta}_{j} \cap \inner {\Delta}_{k} = \varnothing  \quad \textrm{ for } j \neq k
\end{equation}
and $w_j$ is non-negative and integrable on the real line with
\begin{equation} \label{eq:wj}
	w_{j}(x) \equiv 0 \quad \textrm{ for } x \in \mathbb R \setminus \Delta_j. 
\end{equation}

Angelesco ensembles are MOP ensembles associated with Angelesco systems of weights.  
Let $\vec{n}$ be a multi-index $(n_{1}, \dots, n_{r}) \in \mathbb{N}^{r}$ and define 
$|\vec{n}| := n_{1}+ \dots + n_{r}$.  The Angelesco ensemble with respect to $\vec{w}$ and $\vec{n}$ 
is the probability measure on $\mathbb R^{|\vec{n}|}$ with probability density function
\begin{equation} \label{eq:Pn}
	\mathcal P_{\vec{n}}(x_{1}, \dots, x_{|\vec{n}|}) := \frac{1}{Z_{\vec{n}}} \det \left[ x_{k}^{j-1} \right]_{j,k=1}^{|\vec{n}|} 
	\cdot \det \left[ f_{j}(x_{k}) \right]_{j,k=1}^{|\vec{n}|},
\end{equation} 
where $Z_{\vec{n}}$ is a normalization constant and 
\begin{equation}
	f_{n_{1}+\dots + n_{i-1} + j}(x) := x^{j-1} w_{i}(x) \quad 
	\textrm{ for } j = 1, \dots, n_i, i = 1, \dots, r.  
\end{equation}
The Angelesco properties \eqref{eq:Deltaj}-\eqref{eq:wj} imply that \eqref{eq:Pn} 
is non-negative for every $(x_1, \mathellipsis, x_{|\vec{n}|}) \in \mathbb R^{|\vec{n}|}$,
and can be non-zero only if $n_j$ of the particles $x_1, \mathellipsis, x_{|\vec{n}|}$ are in $\Delta_j$
for every $j = 1, \mathellipsis, r$. Thus in an  Angelesco ensemble, with probability $1$, $n_j$ particles are
located in $\Delta_j$ for every $j=1, \mathellipsis, r$.

The probability density function \eqref{eq:Pn} is a biorthogonal ensemble \cite{bor99},
which is a special case of a determinantal point process. 
The correlation kernel is
\begin{equation} \label{eq:Kn} 
	K_{\vec{n}}(x,y) = \sum_{j=1}^{|\vec{n}|} \sum_{k=1}^{|\vec{n}|} \left(M^{-1} \right)_{k,j} x^{j-1} f_k(y) 
	\end{equation}
where $(M^{-1})_{k,j}$ is the $kj$-th entry of  the inverse of the matrix
\begin{equation} \label{eq:M} 
	M =  \left( m_{j,k} \right)_{j,k=1}^{|\vec{n}|}, \qquad m_{j,k} = \int x^{j-1} f_k(x) dx. 
	\end{equation}
That is, see  \cite{blku04a, kui10}:
\begin{equation}
	\mathcal P_{\vec{n}}(x_{1}, \dots, x_{|\vec{n}|}) = 
	\frac{1}{|\vec{n}|!} \det \left[ K_{\vec{n}}(x_{j},x_{k} ) \right]_{j,k =1}^{|\vec{n}|},
\end{equation}
and for each $m = 1, \ldots, |\vec{n}|$,
\begin{equation} 
	\rho_m(x_1, \ldots, x_m) = \det \left[ K_{\vec{n}}(x_j,x_k) \right]_{j,k=1}^m
\end{equation}
	where $\rho_m$ is the $m$-point correlation function.
Another representation for $K_{\vec{n}}$ is by means of a determinant
\[ K_{\vec{n}}(x,y) = \frac{-1}{\det M}
	\begin{vmatrix}  M & \begin{smallmatrix} 1 \\ x \\ \vdots \\ x^{|\vec{n}|-1} \end{smallmatrix} \\
		\begin{smallmatrix} f_1(y) & f_2(y) & \dots & f_{|\vec{n}|}(y) \end{smallmatrix} & 0 
		\end{vmatrix}
		\]
where $M$ is the moment matrix \eqref{eq:M}.

In this paper we consider Angelesco ensembles with respect to two weights $w_{1}$ and $w_{2}$.  
Also we restrict ourselves to diagonal multi-indices $\vec{n} = (n,n)$.  
Our interest is in the local asymptotics of the MOP kernel $K_{\vec{n}} = K_{n,n}$ associated 
to the weights $w_{1}$ and $w_{2}$.  By rescaling if necessary we can assume that $\Delta_{1} = [a,b]$ 
and $\Delta_{2} = [0,1]$ for some $a < b \leq 0$.

The type of local behavior around a certain point is suggested by the behavior of the 
limiting mean particle density in that point. Assume that  $w_j > 0$ a.e.\ on $\Delta_j$. Then
this limiting density only depends on the endpoints of the two intervals $\Delta_{1}$ and $\Delta_{2}$.  
Depending on the values of $a$ and $b$ we get a qualitatively different picture.  
We can distinguish 3 cases and a number of phase transitions, see Figure \ref{fig:121}.

\begin{figure}
\centering
\begin{tikzpicture}
\filldraw[color = verylightgray] (0,0) -- (-.38,-3.8) -- (.5,-3.8) -- (.5, .5) -- (-4.8, .5) -- (-4.8,0) -- (0,0);
\draw [postaction = decorate, decoration = {markings, mark = at position 1 with {\arrow{stealth};}}] (-5,0) -- (1,0);
\draw (1,0) node[above right] {$a$};
\draw [postaction = decorate, decoration = {markings, mark = at position 1 with {\arrow{stealth};}}] (0,-4) -- (0,1);
\draw (0,1) node[above left] {$b$};
\draw [line width = .5] (-.38, -3.8) -- (0,0);
\filldraw (0,0) circle (0.03);
\filldraw [color = black] (-2,0) circle (0.03);
\filldraw [color = black] (-.25, -2.5) circle (0.03);
\draw[color = red, line width = 1]  plot[smooth] coordinates{
(-2.000000,0.000000)(-2.100000,-0.000088)(-2.200000,-0.000671)(-2.300000,-0.002160)(-2.400000,-0.004884)(-2.500000,-0.009107)(-2.600000,-0.015038)(-2.700000,-0.022835)(-2.800000,-0.032620)(-2.900000,-0.044481)(-3.000000,-0.058480)(-3.100000,-0.074654)(-3.200000,-0.093023)(-3.300000,-0.113593)(-3.400000,-0.136355)(-3.500000,-0.161290)(-3.600000,-0.188374)(-3.700000,-0.217572)(-3.800000,-0.248848)(-3.900000,-0.282159)(-4.000000,-0.317460)(-4.100000,-0.354705)(-4.200000,-0.393845)(-4.300000,-0.434831)(-4.400000,-0.477612)(-4.500000,-0.522139)(-4.600000,-0.568361)(-4.700000,-0.616230)(-4.800000,-0.665696)(-4.900000,-0.716712)(-5.000000,-0.769231)
};
\draw [color = red, line width = 1] plot [smooth] coordinates{(-2.000000, 0.000000)(-1.900019, -0.000097)(-1.800156, -0.000820)(-1.700539, -0.002916)(-1.601312, -0.007289)(-1.502630, -0.015026)(-1.404664, -0.027435)(-1.307603, -0.046078)(-1.211652, -0.072827)(-1.117036, -0.109912)(-1.024000, -0.160000)(-0.932809, -0.226266)(-0.843750, -0.312500)(-0.757135, -0.423220)(-0.673297, -0.563820)(-0.592593, -0.740741)(-0.515402, -0.961683)(-0.442125, -1.235866)(-0.373178, -1.574344)(-0.308991, -1.990395)(-0.250000, -2.500000)
};
\draw [font = \footnotesize] (-2.5,-2) node[left]{III};
\draw [font = \footnotesize] (-.4,-.5) node {I};
\draw [font = \footnotesize] (-4.8,-.3) node {II}; 
\draw (-2,0) node[below] {$(-1,0)$};
\draw (-.25, -2.5) node[left] {$ \left( -\frac{1}{4}, -\frac{1}{4} \right)$};
\draw [line width = .5](.05,.1) -- (.4,.8);
\draw (.6, .7) node[above] {$(0,0)$};
\end{tikzpicture}
\caption{The phase diagram for Angelesco ensembles with two weights, supported on $[a,b]$ and $[0,1]$.  
The gray areas represent values of $a$ and $b$ that are forbidden because of the restriction $a < b \leq 0$.  
Remark that the scale on the two axes is different.}
\label{fig:121}
\end{figure}
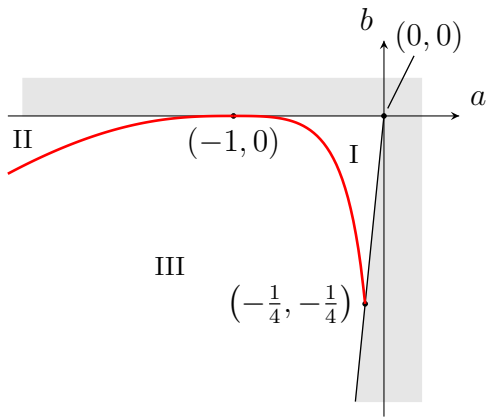

Case {\footnotesize III} corresponds to relatively large separation between the two intervals.  
As a consequence there are hard edges in all four endpoints $a$, $b$, $0$ and $1$, meaning that the 
limiting particle distribution blows up like an inverse square root. 
After rescaling around  one of these endpoints the MOP kernels converge to limiting kernels that
depend on the behavior of the weight at that endpoint. 
In the typical case of Jacobi-type weight functions
\begin{equation} \label{eq:w1w2} 
\begin{aligned} 
	w_1(x) & = (x-a)^{\alpha} (b-x)^{\beta} h_1(x) \chi_{[a,b]}(x), \\
  w_2(x) & = x^{\gamma} (1-x)^{\delta} h_2(x) \chi_{[0,1]}(x), 
  \end{aligned} \end{equation}
this leads to Bessel kernels.

In the cases {\footnotesize I} and {\footnotesize II} the separation is small.  
A gap will emerge on the larger of the two intervals $\Delta_{1}$ and $\Delta_{2}$, due 
to the \textit{pushing of the zeros} effect.  There will then be three hard edges, and 
one soft edge, where the limiting particle density vanishes like a square root.  
Around the soft edge the rescaled MOP kernel will converge to the Airy kernel
(in the case of weights \eqref{eq:w1w2}).

The transition between case {\footnotesize III} and cases {\footnotesize I} and 
{\footnotesize II} is probably related to the Painlev\'{e} II equation as in \cite{clku08}. 
The Bessel, Airy and Painlev\'{e} II kernels already appear as scaling limits for OP ensembles 
\cite{for10}.

This paper deals with the critical point $(a,b) = (-1,0)$, where the two intervals 
$\Delta_{1}$ and $\Delta_{2}$ are touching and of equal size.  
In this case the limiting particle density behaves like $|x|^{-\frac{1}{3}}$ as $x \rightarrow 0$,
see \cite{deku11,kal79}.
We study this critical case in a double scaling limit, where we put $b = 0$ and 
let $a$ tend to $-1$ as $n \rightarrow \infty$:
\begin{equation}
a = a_{n} := -1 + \frac{ \sqrt{2} \tau }{n^{\frac{1}{2}}} \quad \textrm{ for } \tau \in \mathbb{R}.
\end{equation}
The parameter $\tau$ is known as the double scaling parameter.  
This scaling regime corresponds to the phase transition between cases {\footnotesize I} and {\footnotesize II}
with $b$ fixed at $0$.

For $a \neq -1$ and $b =0$ we have a gap in the spectrum between $0$ and $s$ where 
$s = s_{a}$ is given by
\begin{equation}
s_{a} = \frac{ (a+1)^{3}}{9(a^{2}-a+1)},
\end{equation} 
see \cite{kal79}. In $s$ a soft edge appears, while in $0$ we have a hard edge.  
Plots of the limiting particle densities in the cases $a < -1$, $a =-1$ and $a > -1$ 
are given in Figure \ref{fig:187}. As $a$ tends to $-1$ the gap closes and the 
soft edge meets with the hard edge.  This soft-to-hard-edge collision will give rise 
to new critical behavior in $0$.

\begin{figure}[th]
\centering
\begin{tikzpicture}[xscale = 1.3, yscale = 1.3, line width = .5]
\useasboundingbox (-4.5,-1) rectangle (3.7,1);
\draw (2.85,-0.7) node[text = blue] {$a > -1$};
\draw (2.85, -1) node {$(a = -0.7)$};
\draw (0,-0.7) node[text = blue] {$a = -1$};
\draw (-3.25,-0.7) node[text = blue] {$a < -1$};
\draw (-3.25,-1) node {$(a=-1.5)$};
\clip (-6,-1) rectangle (4.5,1);
\draw [color = black] (-1,0) -- (1,0);
\filldraw (-1,0) circle (.02) node[below] {$-1$};
\filldraw (0,0) circle (.02) node[below] {$0$};
\filldraw (1,0) circle (.02)  node[below] {$1$};
\draw[color = red, line width = .5] plot[smooth] coordinates{
 (-0.980000, 1.090445)( -0.960000, 0.776795)( -0.940000, 0.639050)( -0.920000, 0.557697)( -0.900000, 0.502734)( -0.880000, 0.462600)( -0.860000, 0.431774)( -0.840000, 0.407244)( -0.820000, 0.387209)( -0.800000, 0.370518)( -0.780000, 0.356399)( -0.760000, 0.344309)( -0.740000, 0.333857)( -0.720000, 0.324754)( -0.700000, 0.316779)( -0.680000, 0.309763)( -0.660000, 0.303570)( -0.640000, 0.298097)( -0.620000, 0.293257)( -0.600000, 0.288983)( -0.580000, 0.285219)( -0.560000, 0.281919)( -0.540000, 0.279048)( -0.520000, 0.276575)( -0.500000, 0.274478)( -0.480000, 0.272739)( -0.460000, 0.271346)( -0.440000, 0.270291)( -0.420000, 0.269570)( -0.400000, 0.269184)( -0.380000, 0.269139)( -0.360000, 0.269447)( -0.340000, 0.270124)( -0.320000, 0.271194)( -0.300000, 0.272691)( -0.280000, 0.274657)( -0.260000, 0.277151)( -0.240000, 0.280247)( -0.220000, 0.284046)( -0.200000, 0.288686)( -0.180000, 0.294355)( -0.160000, 0.301320)( -0.140000, 0.309977)( -0.120000, 0.320935)( -0.100000, 0.335193)( -0.080000, 0.354543)( -0.060000, 0.382623)( -0.040000, 0.428486)( -0.020000, 0.525878)( -0.018000, 0.543019)( -0.016000, 0.562977)( -0.014000, 0.586659)( -0.012000, 0.615448)( -0.010000, 0.651607)( -0.008000, 0.699155)( -0.006000, 0.766210)( -0.004000, 0.872851)( -0.002000, 1.093359) ( 0.002000, 1.093359)( 0.004000, 0.872851)( 0.006000, 0.766210)( 0.008000, 0.699155)( 0.010000, 0.651607)( 0.012000, 0.615448)( 0.014000, 0.586659)( 0.016000, 0.562977)( 0.018000, 0.543019)( 0.040000, 0.428486)( 0.060000, 0.382623)( 0.080000, 0.354543)( 0.100000, 0.335193)( 0.120000, 0.320935)( 0.140000, 0.309977)( 0.160000, 0.301320)( 0.180000, 0.294355)( 0.200000, 0.288686)( 0.220000, 0.284046)( 0.240000, 0.280247)( 0.260000, 0.277151)( 0.280000, 0.274657)( 0.300000, 0.272691)( 0.320000, 0.271194)( 0.340000, 0.270124)( 0.360000, 0.269447)( 0.380000, 0.269139)( 0.400000, 0.269184)( 0.420000, 0.269570)( 0.440000, 0.270291)( 0.460000, 0.271346)( 0.480000, 0.272739)( 0.500000, 0.274478)( 0.520000, 0.276575)( 0.540000, 0.279048)( 0.560000, 0.281919)( 0.580000, 0.285219)( 0.600000, 0.288983)( 0.620000, 0.293257)( 0.640000, 0.298097)( 0.660000, 0.303570)( 0.680000, 0.309763)( 0.700000, 0.316779)( 0.720000, 0.324754)( 0.740000, 0.333857)( 0.760000, 0.344309)( 0.780000, 0.356399)( 0.800000, 0.370518)( 0.820000, 0.387209)( 0.840000, 0.407244)( 0.860000, 0.431774)( 0.880000, 0.462600)( 0.900000, 0.502734)( 0.920000, 0.557697)( 0.940000, 0.639050)( 0.960000, 0.776795)( 0.980000, 1.090445)
};
\draw[black] (-4.5,0) -- (-2,0);
\filldraw (-4.5,0) circle (.02) node[below] {$a$};
\filldraw (-3,0) circle (.02) node[below] {$0$};
\filldraw (-2,0) circle (.02)  node[below] {$1$};
\draw[color = red, line width = .5] plot[smooth] coordinates{
( -4.486500, 1.058009)( -4.473000, 0.750927)( -4.459500, 0.615444)( -4.446000, 0.535020)( -4.432500, 0.480373)( -4.419000, 0.440216)( -4.405500, 0.409152)( -4.392000, 0.384234)( -4.378500, 0.363698)( -4.365000, 0.346417)( -4.351500, 0.331631)( -4.338000, 0.318807)( -4.324500, 0.307560)( -4.311000, 0.297603)( -4.297500, 0.288717)( -4.284000, 0.280732)( -4.270500, 0.273513)( -4.257000, 0.266953)( -4.243500, 0.260964)( -4.230000, 0.255474)( -4.216500, 0.250424)( -4.203000, 0.245764)( -4.189500, 0.241450)( -4.176000, 0.237448)( -4.162500, 0.233726)( -4.149000, 0.230258)( -4.135500, 0.227021)( -4.122000, 0.223994)( -4.108500, 0.221160)( -4.095000, 0.218504)( -4.081500, 0.216011)( -4.068000, 0.213670)( -4.054500, 0.211469)( -4.041000, 0.209400)( -4.027500, 0.207454)( -4.014000, 0.205622)( -4.000500, 0.203898)( -3.987000, 0.202276)( -3.973500, 0.200749)( -3.960000, 0.199314)( -3.946500, 0.197965)( -3.933000, 0.196697)( -3.919500, 0.195509)( -3.906000, 0.194395)( -3.892500, 0.193353)( -3.879000, 0.192380)( -3.865500, 0.191474)( -3.852000, 0.190633)( -3.838500, 0.189853)( -3.825000, 0.189135)( -3.811500, 0.188475)( -3.798000, 0.187873)( -3.784500, 0.187327)( -3.771000, 0.186837)( -3.757500, 0.186401)( -3.744000, 0.186020)( -3.730500, 0.185691)( -3.717000, 0.185416)( -3.703500, 0.185194)( -3.690000, 0.185025)( -3.676500, 0.184908)( -3.663000, 0.184846)( -3.649500, 0.184837)( -3.636000, 0.184883)( -3.622500, 0.184984)( -3.609000, 0.185142)( -3.595500, 0.185359)( -3.582000, 0.185635)( -3.568500, 0.185973)( -3.555000, 0.186375)( -3.541500, 0.186844)( -3.528000, 0.187383)( -3.514500, 0.187994)( -3.501000, 0.188683)( -3.487500, 0.189453)( -3.474000, 0.190310)( -3.460500, 0.191260)( -3.447000, 0.192310)( -3.433500, 0.193468)( -3.420000, 0.194743)( -3.406500, 0.196147)( -3.393000, 0.197692)( -3.379500, 0.199395)( -3.366000, 0.201272)( -3.352500, 0.203348)( -3.339000, 0.205648)( -3.325500, 0.208206)( -3.312000, 0.211063)( -3.298500, 0.214274)( -3.285000, 0.217905)( -3.271500, 0.222047)( -3.258000, 0.226821)( -3.244500, 0.232397)( -3.231000, 0.239020)( -3.217500, 0.247056)( -3.204000, 0.257090)( -3.202200, 0.258625)( -3.200400, 0.260214)( -3.198600, 0.261862)( -3.196800, 0.263572)( -3.195000, 0.265348)( -3.193200, 0.267193)( -3.191400, 0.269112)( -3.189600, 0.271111)( -3.187800, 0.273193)( -3.186000, 0.275364)( -3.184200, 0.277630)( -3.182400, 0.279997)( -3.180600, 0.282470)( -3.178800, 0.285055)( -3.177000, 0.287755)( -3.175200, 0.290574)( -3.173400, 0.293510)( -3.171600, 0.296559)( -3.169800, 0.299702)( -3.168000, 0.302906)( -3.166200, 0.306103)( -3.164400, 0.309162)( -3.162600, 0.311833)( -3.160800, 0.313607)( -3.159000, 0.313378)( -3.157200, 0.308452)( -3.155400, 0.290833)( -3.153600, 0.224388)( -3.151800, 0.000000)( -3.150000, 0.000000)
};
\draw[color = red, line width = .5] plot[smooth] coordinates{
( -2.998000, 1.093359)( -2.996000, 0.872851)( -2.994000, 0.766210)( -2.992000, 0.699155)( -2.990000, 0.651607)( -2.988000, 0.615448)( -2.986000, 0.586659)( -2.984000, 0.562977)( -2.982000, 0.543019)( -2.980000, 0.525878)( -2.978000, 0.510935)( -2.950000, 0.402307)( -2.925000, 0.360532)( -2.900000, 0.335193)( -2.875000, 0.317934)( -2.850000, 0.305406)( -2.825000, 0.295962)( -2.800000, 0.288686)( -2.775000, 0.283024)( -2.750000, 0.278618)( -2.725000, 0.275229)( -2.700000, 0.272691)( -2.675000, 0.270888)( -2.650000, 0.269738)( -2.625000, 0.269182)( -2.600000, 0.269184)( -2.575000, 0.269719)( -2.550000, 0.270777)( -2.525000, 0.272359)( -2.500000, 0.274478)( -2.475000, 0.277157)( -2.450000, 0.280432)( -2.425000, 0.284352)( -2.400000, 0.288983)( -2.375000, 0.294412)( -2.350000, 0.300750)( -2.325000, 0.308142)( -2.300000, 0.316779)( -2.275000, 0.326917)( -2.250000, 0.338898)( -2.225000, 0.353203)( -2.200000, 0.370518)( -2.175000, 0.391867)( -2.150000, 0.418855)( -2.125000, 0.454167)( -2.100000, 0.502734)( -2.075000, 0.574876)( -2.050000, 0.697399)( -2.025000, 0.977120)( -2.020000, 1.090445)( -2.010000, 1.536493)
};
\filldraw (-3.15,0) circle (.02);
\draw (-3.20,0) node[below]{$s$};
\draw[black] (2,0) -- (3.7,0);
\filldraw (2,0) circle (.02) node[below] {$a$};
\filldraw (2.7,0) circle (.02) node[below] {$0$};
\filldraw (3.7,0) circle (.02)  node[below] {$1$};

\draw[color = red, line width = .5] plot[smooth] coordinates{
( 2.007000, 2.129405)( 2.014000, 1.511201)( 2.021000, 1.238426)( 2.028000, 1.076487)( 2.035000, 0.966445)( 2.042000, 0.885574)( 2.049000, 0.823011)( 2.056000, 0.772822)( 2.063000, 0.731458)( 2.070000, 0.696648)( 2.077000, 0.666863)( 2.084000, 0.641032)( 2.091000, 0.618377)( 2.098000, 0.598321)( 2.105000, 0.580423)( 2.112000, 0.564340)( 2.119000, 0.549803)( 2.126000, 0.536593)( 2.133000, 0.524536)( 2.140000, 0.513487)( 2.147000, 0.503324)( 2.154000, 0.493948)( 2.161000, 0.485273)( 2.168000, 0.477228)( 2.175000, 0.469750)( 2.182000, 0.462785)( 2.189000, 0.456288)( 2.196000, 0.450217)( 2.203000, 0.444538)( 2.210000, 0.439220)( 2.217000, 0.434234)( 2.224000, 0.429558)( 2.231000, 0.425168)( 2.238000, 0.421047)( 2.245000, 0.417177)( 2.252000, 0.413542)( 2.259000, 0.410129)( 2.266000, 0.406925)( 2.273000, 0.403920)( 2.280000, 0.401103)( 2.287000, 0.398466)( 2.294000, 0.396000)( 2.301000, 0.393699)( 2.308000, 0.391555)( 2.315000, 0.389564)( 2.322000, 0.387719)( 2.329000, 0.386017)( 2.336000, 0.384454)( 2.343000, 0.383025)( 2.350000, 0.381729)( 2.357000, 0.380562)( 2.364000, 0.379523)( 2.371000, 0.378610)( 2.378000, 0.377822)( 2.385000, 0.377158)( 2.392000, 0.376618)( 2.399000, 0.376202)( 2.406000, 0.375911)( 2.413000, 0.375745)( 2.420000, 0.375707)( 2.427000, 0.375798)( 2.434000, 0.376021)( 2.441000, 0.376379)( 2.448000, 0.376876)( 2.455000, 0.377516)( 2.462000, 0.378304)( 2.469000, 0.379247)( 2.476000, 0.380352)( 2.483000, 0.381625)( 2.490000, 0.383078)( 2.497000, 0.384719)( 2.504000, 0.386561)( 2.511000, 0.388617)( 2.518000, 0.390904)( 2.525000, 0.393439)( 2.532000, 0.396244)( 2.539000, 0.399341)( 2.546000, 0.402761)( 2.553000, 0.406536)( 2.560000, 0.410704)( 2.567000, 0.415312)( 2.574000, 0.420414)( 2.581000, 0.426078)( 2.588000, 0.432383)( 2.595000, 0.439430)( 2.602000, 0.447340)( 2.609000, 0.456271)( 2.616000, 0.466423)( 2.623000, 0.478059)( 2.630000, 0.491530)( 2.637000, 0.507320)( 2.644000, 0.526117)( 2.651000, 0.548941)( 2.658000, 0.577377)( 2.665000, 0.614069)( 2.672000, 0.663833)( 2.679000, 0.736706)( 2.686000, 0.858626)( 2.693000, 1.132654)
};
\draw[color = red, line width = .5] plot[smooth] coordinates{
( 2.800900, 0.000000)( 2.801800, 0.396577)( 2.802700, 0.488039)( 2.803600, 0.511062)( 2.804500, 0.516025)( 2.805400, 0.514424)( 2.806300, 0.510129)( 2.807200, 0.504713)( 2.808100, 0.498885)( 2.809000, 0.492987)( 2.818000, 0.444607)( 2.827000, 0.413284)( 2.836000, 0.391274)( 2.845000, 0.374695)( 2.854000, 0.361608)( 2.863000, 0.350930)( 2.872000, 0.342006)( 2.881000, 0.334409)( 2.890000, 0.327848)( 2.899000, 0.322120)( 2.908000, 0.317071)( 2.917000, 0.312589)( 2.926000, 0.308587)( 2.935000, 0.304996)( 2.944000, 0.301762)( 2.953000, 0.298840)( 2.962000, 0.296195)( 2.971000, 0.293797)( 2.980000, 0.291620)( 2.989000, 0.289644)( 2.998000, 0.287852)( 3.007000, 0.286227)( 3.016000, 0.284758)( 3.025000, 0.283432)( 3.034000, 0.282241)( 3.043000, 0.281176)( 3.052000, 0.280230)( 3.061000, 0.279395)( 3.070000, 0.278668)( 3.079000, 0.278043)( 3.088000, 0.277516)( 3.097000, 0.277083)( 3.106000, 0.276741)( 3.115000, 0.276488)( 3.124000, 0.276321)( 3.133000, 0.276239)( 3.142000, 0.276239)( 3.151000, 0.276322)( 3.160000, 0.276485)( 3.169000, 0.276729)( 3.178000, 0.277053)( 3.187000, 0.277456)( 3.196000, 0.277940)( 3.205000, 0.278504)( 3.214000, 0.279150)( 3.223000, 0.279877)( 3.232000, 0.280688)( 3.241000, 0.281584)( 3.250000, 0.282566)( 3.259000, 0.283637)( 3.268000, 0.284799)( 3.277000, 0.286054)( 3.286000, 0.287406)( 3.295000, 0.288858)( 3.304000, 0.290414)( 3.313000, 0.292077)( 3.322000, 0.293853)( 3.331000, 0.295745)( 3.340000, 0.297761)( 3.349000, 0.299906)( 3.358000, 0.302187)( 3.367000, 0.304612)( 3.376000, 0.307188)( 3.385000, 0.309926)( 3.394000, 0.312836)( 3.403000, 0.315929)( 3.412000, 0.319218)( 3.421000, 0.322718)( 3.430000, 0.326444)( 3.439000, 0.330415)( 3.448000, 0.334652)( 3.457000, 0.339177)( 3.466000, 0.344016)( 3.475000, 0.349201)( 3.484000, 0.354765)( 3.493000, 0.360748)( 3.502000, 0.367196)( 3.511000, 0.374163)( 3.520000, 0.381712)( 3.529000, 0.389919)( 3.538000, 0.398871)( 3.547000, 0.408677)( 3.556000, 0.419468)( 3.565000, 0.431405)( 3.574000, 0.444687)( 3.583000, 0.459570)( 3.592000, 0.476381)( 3.601000, 0.495548)( 3.610000, 0.517649)( 3.619000, 0.543479)( 3.628000, 0.574173)( 3.637000, 0.611416)( 3.646000, 0.657844)( 3.655000, 0.717862)( 3.664000, 0.799535)( 3.673000, 0.919733)( 3.682000, 1.122214)( 3.691000, 1.581148)
};
\filldraw (2.8009,0) circle (.02);
\draw (2.85, 0) node[below] {$s$};
\end{tikzpicture}
\caption{Plot of the limiting particle densities in Angelesco ensembles with two weights on $[a,0]$ and $[0,1]$ 
in the cases $a < -1$, $a=-1$, and $-1 < a < 0$.  For clarity the size of the gap between 
$0$ and $s$ has been exaggerated in the figure. In the left and right plots the density has
a local maximum near $s$. Even though the plots may suggest otherwise, the density is in fact
real analytic there.}
\label{fig:187}
\end{figure}
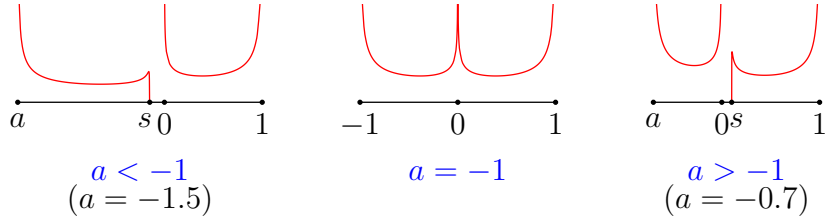

So for the remainder of the paper we take for $a < 0$
\begin{equation}
\Delta_{1} := [a, 0], \qquad \Delta_{2} := [0,1],
\end{equation} and as in \cite{deku11} we consider the following modified Jacobi weights:
\begin{equation} \label{eq:95}
\begin{aligned}
		 w_{1}(x) & := (x-a)^{\alpha} |x|^{\beta} h_{1}(x) \chi_{[a,0]}(x),  \\ 
		 w_{2}(x) & := x^{\beta}(1-x)^{\gamma} h_{2}(x) \chi_{[0,1]}(x), 
		 \end{aligned}
\end{equation}
with $\alpha, \beta, \gamma > -1$ and functions $h_{j}$ that are positive on $\Delta_{j}$ 
and analytic in a neighborhood of $\Delta_{}$ for $j=1,2$. Note that we use the same exponent
$\beta$ in both $w_1$ and $w_2$.

In \cite{deku11} we considered the weights \eqref{eq:95} and obtained a Mehler-Heine asymptotic 
formula for the MOP associated to $w_{1}$ and $w_{2}$ and $\vec{n} = (n,n)$ as $n \to \infty$.
See also \cite{tak09,tul09}.

We use the notation $K_{n,n}(\cdot, \cdot; a)$ to denote the dependence of 
the correlation kernel \eqref{eq:Kn} on the parameter $a$.  The main result is then as follows:

\begin{theorem} \label{thm:145}
Let $K_{n,n}$ be the correlation kernel of the MOP ensemble with
weights \eqref{eq:95} and diagonal multi-index $\vec{n} = (n,n)$. Let $\tau, x, y \in \mathbb{R}$ with $y \neq 0$.
Then as $n \rightarrow + \infty$ we have:
\begin{equation}
	\frac{1}{\sqrt{2} n^{\frac{3}{2}}} K_{n,n} \left( \frac{x}{\sqrt{2} n^{\frac{3}{2}}}, \frac{y}{\sqrt{2} n^{\frac{3}{2}}}; -1 + 			\frac{\sqrt{2}\tau}{n^{\frac{1}{2}}} \right) 
	= \mathbb{K}^{\Ang}(x,y;\tau) + \mathcal{O}\left( \frac{ y^{\beta}}{n^{\frac{1}{6}}} 	\right) \label{eq:150}
\end{equation}
for some limiting kernel $\mathbb{K}^{\Ang}(x,y;\tau)$ that will be described below in Propositions
\ref{prop:169} and \ref{prop:183}.
The $\mathcal{O}$-term in \eqref{eq:150} is uniform for $x$ in a bounded set 
and $y$ in a compact subset in $\mathbb{R} \setminus \{ 0 \}$.
\end{theorem}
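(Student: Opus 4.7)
The plan is to apply the Deift--Zhou steepest descent method to the Van Assche--Geronimo--Kuijlaars Riemann--Hilbert problem (RHP) for the multiple orthogonal polynomials associated with $(w_1,w_2)$ and $\vec{n}=(n,n)$. I would first recall that the sought $3\times 3$ matrix $Y$, analytic off $\Delta_1\cup\Delta_2$ and normalized by $Y(z) \sim \operatorname{diag}(z^{2n},z^{-n},z^{-n})$ at infinity, encodes the kernel through the Christoffel--Darboux-type identity
\begin{equation*}
K_{n,n}(x,y;a) \;=\; \frac{1}{2\pi i (x-y)} \begin{pmatrix} 0 & w_1(y) & w_2(y) \end{pmatrix} Y_+^{-1}(y)\, Y_+(x) \begin{pmatrix} 1 \\ 0 \\ 0 \end{pmatrix},
\end{equation*}
so that \eqref{eq:150} reduces to sharp asymptotics for $Y$ when $a=a_n=-1+\sqrt{2}\tau n^{-1/2}$ and $x,y$ live at the scale $n^{-3/2}$.

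Next, I would execute the standard chain of transformations $Y\to T\to S\to R$. For the first step, I would build $g$-functions from the vector equilibrium measure for the Angelesco problem on $[a_n,0]\cup[0,1]$, whose right component carries a shrinking gap $[0,s_{a_n}]$ of order $n^{-3/2}$; the $g$-functions thus depend analytically on $\tau$ through $a_n$. The resulting $T$ is normalized at infinity and has oscillatory jumps on $\Delta_1\cup\Delta_2$; opening lenses around each interval yields $S$, whose jumps on the lens boundaries are exponentially close to $I$ away from the endpoints $a_n,\,0,\,1$. For the parametrix construction, I would use a global outside parametrix built from meromorphic functions on the Riemann surface associated with the critical spectral curve, Bessel-type local parametrices at the hard edges $a_n$ and $1$, and, most importantly, a new \emph{Angelesco parametrix} $P^{(0)}$ in a fixed disk around the origin. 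The local coordinate at $0$ should be chosen at the scale $n^{3/2}$ so that matching $P^{(0)}$ with the outside parametrix produces precisely the limit kernel $\mathbb{K}^{\Ang}(x,y;\tau)$ of Propositions \ref{prop:169} and \ref{prop:183}. Setting $R = S\cdot P^{-1}$ then gives an RHP with jumps $I+\mathcal{O}(n^{-1/6})$ on the boundary of the disk around $0$; small-norm theory delivers $R = I+\mathcal{O}(n^{-1/6})$, and reading the kernel formula back through the transformations yields the stated asymptotics, the extra $y^\beta$ factor in the error arising from the prefactor $w_2(y)$ evaluated at the rescaled point.

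The main obstacle is the identification and solution of the model RHP defining the Angelesco parametrix $P^{(0)}$. It must simultaneously encode the hard-edge behavior at $0$ coming from both $\Delta_1$ and $\Delta_2$, the soft edge at $s_{a_n}$ colliding with $0$, and the $\tau$-dependence inherited from $a_n$; no off-the-shelf Bessel, Airy, or Painlev\'e~II parametrix fits all three roles at once. I would expect the model to be a $3\times 3$ RHP posed on a rayed contour through the origin, solvable in terms of new special functions which will in turn furnish the explicit expression for $\mathbb{K}^{\Ang}$. A subsidiary but delicate task is verifying that the matching condition on $\partial D(0,\delta)$ holds with the advertised order $n^{-1/6}$; this requires careful control of the $g$-functions through their dependence on $\tau$ and on the $|x|^{-1/3}$ singularity of the critical equilibrium density at $0$, which is exactly what pins down the scale $n^{3/2}$ in \eqref{eq:150} and the error exponent $1/6$.
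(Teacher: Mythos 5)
Your outline does follow the same route as the paper: the Van Assche--Geronimo--Kuijlaars RH problem with the kernel identity \eqref{eq:254}, the chain of transformations with $g$-functions and lenses, Bessel parametrices at the hard edges, and a new $3\times 3$ model RH problem on a six-ray contour through the origin (the parametrix $\Psi$ of Section \ref{sec:356}, built from solutions of the third-order equation \eqref{diffeqq}) whose appearance produces $\mathbb{K}^{\Ang}$. But there is a genuine gap in your error analysis near the origin. First, a structural point: the local parametrix at $0$ lives on a disk $U_0$ of \emph{shrinking} radius $n^{-1/2}$, not a fixed disk, and the matching on $\partial U_0$ is \emph{not} uniformly $I+\mathcal{O}(n^{-1/6})$: it contains an $\mathcal{O}(1)$ mismatch. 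The paper (following \cite{deku11}) must therefore insert a further explicit transformation $R_0\mapsto R$, with $R_0(z)=R(z)(I-V_n(z;a))$ on $U_0$ as in \eqref{eq:Rz}, where $V_n$ is uniformly bounded (not small) and has the rank-one form \eqref{eq:formVn}. Your single small-norm step $R=S\cdot P^{-1}$ with jump $I+\mathcal{O}(n^{-1/6})$ would fail at this point.

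Second, and decisive: even granting $R=I+\mathcal{O}(n^{-1/6})$, ``reading the kernel formula back through the transformations'' is not routine, because the local formula \eqref{eq:284} conjugates the error by the matrices $E_n$, whose entries grow like $n^{1/2}$ by \eqref{eq:535} and \eqref{eq:638}. A naive combination of the available bounds gives only $\mathcal{O}(1)$ for the middle factor $E_n(y_n)^{-1}R_0(y_n)^{-1}R_0(x_n)E_n(x_n)$ — that is, no convergence at all, let alone the rate $n^{-1/6}$. The heart of the paper's proof is Proposition \ref{prop:338}, which upgrades this to $I+\mathcal{O}(n^{-1/6})$ via algebraic cancellations your proposal does not anticipate: the nilpotency $V_n(x)V_n(y)=0$ of \eqref{eq:Vnnilpotent}, which follows from $(\sqrt{2}i)^2+1^2+(-1)^2=0$ in \eqref{eq:formVn}; the bound $V_n(x_n;a_n)-V_n(y_n;a_n)=\mathcal{O}(n^{-1})$ on the shrinking disk; and the alignment \eqref{eq:Enxn}--\eqref{eq:Enyninverse} of the leading rank-one parts of $E_n(x_n)$ and $E_n(y_n)^{-1}$ with the vector structure of $V_n$, yielding the crucial estimate \eqref{eq:639}. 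Without this cancellation mechanism the error term in \eqref{eq:150} is out of reach, so what your proposal treats as a standard matching-and-small-norm step is in fact the key new difficulty that the proof of Theorem \ref{thm:145} is organized around.
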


The limiting kernel $\mathbb{K}^{\Ang}$ is referred to as the \textit{Angelesco kernel}.  
It has an expression in terms of the Angelesco model parametrix $\Psi$ which we 
will describe in Section \ref{sec:356}.  More explicit expressions exist, 
one involving a certain pairing of two analytic functions $q_{0}, r_{0}$ and 
one in terms of a double contour integral.

\begin{proposition}\label{prop:169}
Define functions $q_{0}$ and $r_{0}$ by
\begin{align} \label{eq:181}
	q_{0}(x) & = q_{0}(x;\tau) := 
	\frac{1}{2\pi i} |x|^{\beta +2} \int_{\Gamma_{0}} 
	t^{-\beta -3} e^{\frac{\tau x}{t} - \frac{ x^{2}}{2t^{2}} + t } \, dt, && x \in \mathbb{R}, \\
 \label{eq:184}
	r_{0}(y) & = r_{0}(y;\tau) 
	:= \frac{1}{2\pi i} |y|^{-\beta -1} \int_{\widehat{\Gamma}_{0}} 
	s^{\beta} e^{-\frac{\tau y }{s} + \frac{y^{2}}{2s^{2}} - s} \,ds, && y \in \mathbb{R}\setminus \{0\}.
\end{align} 
Here the contours $\Gamma_{0}$ and $\widehat{\Gamma}_{0}$ are shown 
in Figure~\ref{fig:197}, and we choose the principal branches for the fractional powers 
$t \mapsto t^{-\beta -3}$ and $s \mapsto s^{\beta}$.  
Then we can write $\mathbb{K}^{\Ang}(x,y;\tau)$ as a pairing of $q_{0}(x)$ and $r_{0}(y)$ in the following way:
\begin{equation} \label{eq:187}
		\mathbb{K}^{\Ang}(x,y;\tau) = 
		\left| \frac{y}{x} \right|^{\beta} \left[ \begin{array}{c} 
		yr_{0}(y) q_{0}''(x) - \left( (\beta +1) r_{0}(y) + y r_{0}'(y) \right) q_{0}'(x) \\ 
		+ \left( yr_{0}''(y) + (\beta +2)r_{0}'(y) - \tau r_{0}(y) \right) q_{0}(x) \end{array} \right].
\end{equation}
\end{proposition}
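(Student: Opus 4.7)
The plan is to start from the description of $\mathbb{K}^{\Ang}(x,y;\tau)$ developed in Section~\ref{sec:356}, where the kernel is written as a bilinear pairing of entries of the $3\times 3$ Angelesco model parametrix $\Psi(\,\cdot\,;\tau)$, schematically of the form
\[
\mathbb{K}^{\Ang}(x,y;\tau) = \vec{v}_{\beta}(y)\,\Psi(y;\tau)^{-1}\Psi(x;\tau)\,\vec{u}_{\beta}(x),
\]
for specific $3$-vectors $\vec{u}_{\beta}(x)$ and $\vec{v}_{\beta}(y)$ absorbing the local factors $|x|^{\beta}$ and $|y|^{\beta}$ that come from the modified Jacobi weights \eqref{eq:95}.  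The task is then to convert this matrix-entry description into the concrete bilinear expression \eqref{eq:187}.

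I would proceed in two stages.  First, I identify $q_{0}$ and $r_{0}$ with specific matrix entries: a direct check that the contour integral \eqref{eq:181} satisfies the jump conditions across the rays through $0$ and the prescribed local behaviour as $x\to 0$ and $x\to\infty$ (both recalled in Section~\ref{sec:356}) shows that $q_{0}$ agrees, up to normalisation, with (say) the $(1,1)$-entry of $\Psi$; the analogous calculation for \eqref{eq:184} identifies $r_{0}$ with the matching entry of $\Psi^{-1}$.  Second, I would use the linear matrix ODE $\partial_{x}\Psi = A(x;\tau)\Psi$ satisfied by the model parametrix to express the two remaining entries in that column of $\Psi$ as linear combinations of $q_{0},q_{0}'$ and $q_{0}''$; the conjugate equation $\partial_{y}\Psi^{-1} = -\Psi^{-1}A(y;\tau)$ then similarly gives the corresponding row of $\Psi^{-1}$ as a combination of $r_{0},r_{0}'$ and $r_{0}''$.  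The coefficients $(\beta+1)$ and $(\beta+2)$ in \eqref{eq:187} would emerge from differentiating the $|x|^{\beta+2}$ prefactor present in \eqref{eq:181}, while the term $-\tau\,r_{0}(y)$ is contributed by the $\tau$-dependent piece of the Lax matrix $A(y;\tau)$ acting on the row of $\Psi^{-1}$.

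Substituting these identifications into the bilinear expression from Section~\ref{sec:356} then collapses the nine entry-products of $\Psi(y)^{-1}\Psi(x)$ into a single three-term combination, since $\vec{u}_{\beta}(x)$ and $\vec{v}_{\beta}(y)$ select only one column and one row respectively; the prefactor $|y/x|^{\beta}$ in \eqref{eq:187} appears as the ratio of the $|x|^{\beta}$ and $|y|^{\beta}$ factors built into $\vec{u}_{\beta}$ and $\vec{v}_{\beta}$.  That the resulting expression is of rank $3$ in $(x,y)$ is consistent with only $O(1)$ particles being present in the rescaled window of width $n^{-3/2}$ around the collision point.

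The main obstacle I expect is the algebraic bookkeeping in the second stage: pinning down precisely how differentiation under the integral sign in \eqref{eq:181} interacts with the explicit form of $A(x;\tau)$ to produce the coefficients $(\beta+1)$, $(\beta+2)$ and $-\tau$ in exactly the positions stated in \eqref{eq:187}.  A related delicate point is the branch choice for $t^{-\beta-3}$ and $s^{\beta}$ on deformations of $\Gamma_{0}$ and $\widehat{\Gamma}_{0}$; these must be compatible with the $|x|^{\beta}$ normalisation of $q_{0}$ in order to render the final formula correctly insensitive to the sign of $x$.
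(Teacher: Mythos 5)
Your overall frame agrees with the paper's: start from the expression \eqref{eq:479} of $\mathbb{K}^{\Ang}$ in terms of $\Psi(y;\tau)^{-1}\Psi(x;\tau)$ with the row and column vectors of \eqref{eq:550}, identify the scalar $q_0$ selected on the $\Psi(x)$ side, and obtain the $r_0$ side from the structure of $\Psi^{-1}$. Moreover, your Lax-equation observation is essentially correct and is a legitimate variant of what the paper does: since each column of $\Psi$ is of the form $(q,q',q'')^{\transpose}$ with $q$ solving \eqref{eq:488}, writing $\partial_z \Psi = A(z)\Psi$ with the companion matrix $A$ and then $\partial_z \Psi^{-1} = -\Psi^{-1}A$ forces each row of $\Psi^{-1}$ to be $\left(\mathcal{B}_0 r, \mathcal{B}_1 r, \mathcal{B}_2 r\right)$ for some solution $r$ of the adjoint equation \eqref{eq:501}; this reproduces the operators \eqref{eq:509} and hence the bilinear concomitant structure \eqref{eq:506}, \eqref{eq:511} that the paper introduces directly. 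Two local corrections, though: $q_0$ is not a single entry of $\Psi$ but the combination of first- and third-column entries picked out by the column vector (and it takes different forms for $x>0$ and $x<0$, cf.\ \eqref{eq:751}); and the coefficients $(\beta+1)$, $(\beta+2)$ and $-\tau$ in \eqref{eq:187} all sit on the $r_0$ side, so they come from the adjoint structure in the $y$-variable (your own row computation would produce them), not from differentiating the $|x|^{\beta+2}$ prefactor in \eqref{eq:181}.

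The genuine gap is in how you determine \emph{which} solutions of the adjoint equation give the rows of $\Psi^{-1}$, including their normalization. The Lax argument fixes the rows only up to a three-dimensional ambiguity per row (any solution $r$ of \eqref{eq:501} gives a formal solution of $\vec{\rho}\,' = -\vec{\rho}A$), and your proposed substitute --- checking jump conditions and local behaviour to identify $r_0$ with ``the matching entry of $\Psi^{-1}$'' --- does not close this: RH uniqueness applies to the full matrix problem, not entrywise, and in particular it cannot produce the constants. The paper's key ingredient, absent from your outline, is Proposition \ref{prop:537}: for $q$ and $r$ given by the contour integrals \eqref{eq:539}, the (constant) diagonal concomitant $\mathcal{B}[q,r](z,z)$ equals $2\pi i$ times the signed intersection number of the contours $\Gamma$ and $\widehat{\Gamma}$, proved by rewriting the concomitant as a double integral, integrating by parts, and picking up a residue at the intersection point. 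It is this computation that selects the contours $\widehat{\Gamma}_k$ of Figure \ref{fig:656} satisfying the duality $\mathcal{B}[q_j,r_k](z,z)=\delta_{jk}$ in \eqref{concomitant01}, fixes the factor $\frac{1}{2\pi i}$ in \eqref{eq:722}, and yields the piecewise formulas \eqref{eq:751}--\eqref{eq:754} whose contour deformations give \eqref{eq:181} and \eqref{eq:184}; the branch-cut compatibility you flag as a ``delicate point'' (the hypothesis $t^{-\beta-3}s^{\beta}=t^{-3}$ on the cut) is exactly what makes the residue come out to $2\pi i$ independently of $\beta$. Without this intersection-number lemma or an equivalent normalization argument, your plan establishes \eqref{eq:187} only up to undetermined constant factors in $r_0$, so as written the proof is incomplete.
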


\begin{figure}[t]
\begin{center}
\begin{tikzpicture}[scale = 1.5, line width = .66]
\filldraw (0,0) circle (.03) node[above left]{$0$};
\draw [dashed] (0,0) -- (-1.5,0);
\draw [postaction = decorate, decoration = {markings, mark = at position 0 with {\arrow{stealth};}}] (.4,0) ellipse (.4 and .6);
\draw [postaction = decorate, decoration = {markings, mark = at position .5 with {\arrow{stealth};}}] (.5,.8) arc (90:-90:.5 and .8);
\draw (.5,.8) .. controls (0,.8) and (-.5,.5) .. (-1,.5);
\draw (.5,-.8) .. controls (0,-.8) and (-.5,-.5) .. (-1,-.5);
\draw (.9,-.5) node [right] {$\Gamma_{0}$};
\draw (-.1,-.1) node[below] {$\widehat{\Gamma}_{0}$};
\draw [dotted] (-1,.5) -- (-1.3,.5);
\draw [dotted] (-1,-.5) -- (-1.3,-.5);
\end{tikzpicture}
\caption{The contours $\Gamma_{0}$ and $\widehat{\Gamma}_{0}$ appearing in the integral formulas \eqref{eq:181}, \eqref{eq:184},
and \eqref{eq:235} for the Angelesco kernel $\mathbb{K}^{\Ang}$. The dashed line denotes the 
branch cut of $t^{\beta}$ and $s^{\beta}$.\label{fig:197}}
\end{center}
\end{figure}
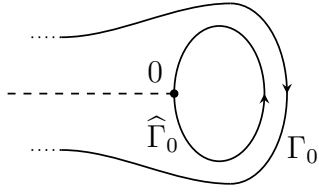

From the expressions for $q_{0}$ and $r_{0}$ it is clear that we have
\begin{equation}
q_{0}(x;\tau) = q_{0}(-x;-\tau), \qquad r_{0}(y;\tau) = r_{0}(-y; -\tau).
\end{equation}
Hence the Angelesco kernel satisfies the following (expected) symmetry:
\begin{equation}
\mathbb{K}^{\Ang}(x,y;\tau) = \mathbb{K}^{\Ang} (-x,-y;-\tau).
\end{equation}

\begin{proposition} \label{prop:183}
Let the contours $\Gamma_{0}$ and $\widehat{\Gamma}_{0}$ be again as in Figure~\ref{fig:197}, 
and take $x,y \in \mathbb{R}
\setminus \{ 0 \}$, $\tau \in \mathbb{R}$.  Then we have the following double integral formula for $\mathbb{K}^{\Ang}(x,y;\tau)$:
\begin{equation} \label{eq:235}
\mathbb{K}^{\Ang}(x,y;\tau) = \frac{ \sign (y) }{(2\pi i)^{2}} \int_{t \in \Gamma_{0}} \int_{s \in \widehat{\Gamma}_{0}} \frac{s^{\beta}}{t^{\beta }} \frac{1}{xs-yt} \frac{ e^{\frac{ \tau x}{t} - \frac{ x^{2}}{2t^{2}} +t} }{e^{\frac{ \tau y}{s} - \frac{y^{2}}{2s^{2}} +s }} \,ds \, dt.
\end{equation}
Again main branches are used for the fractional powers $t \mapsto t^{\beta}$ and $s \mapsto s^{\beta}$.
\end{proposition}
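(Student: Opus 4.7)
The plan is to derive the double contour integral \eqref{eq:235} from the pairing formula \eqref{eq:187} of Proposition~\ref{prop:169}, by substituting the integral representations \eqref{eq:181}, \eqref{eq:184} and simplifying. By the symmetry $\mathbb{K}^{\Ang}(x,y;\tau)=\mathbb{K}^{\Ang}(-x,-y;-\tau)$ and the explicit $\sign(y)$ prefactor in \eqref{eq:235}, it suffices to treat $x,y>0$; the other sign combinations follow from the symmetry together with a careful tracking of the principal branches. The first step is to simplify \eqref{eq:181} and \eqref{eq:184} via the substitutions $t=x/u$ and $s=yv$. For $x,y>0$ the fractional powers transform cleanly and the Jacobians cancel the prefactors $|x|^{\beta+2}$ and $|y|^{-\beta-1}$, producing, up to contour orientation,
\begin{equation*}
q_0(x)=\frac{1}{2\pi i}\int u^{\beta+1}e^{\tau u-u^2/2+x/u}\,du,\qquad r_0(y)=\frac{1}{2\pi i}\int v^{\beta}e^{-\tau/v+1/(2v^2)-yv}\,dv
\end{equation*}
on transformed contours. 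In these variables $\partial_x$ acts under the $u$-integral as multiplication by $1/u$ and $\partial_y$ under the $v$-integral as multiplication by $-v$, so each of $q_0,q_0',q_0''$ and $r_0,r_0',r_0''$ is a single integral with a monomial integrand.

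Plugging these six integrals into \eqref{eq:187} and writing the product of two single integrals as one double integral, every contribution shares a common prefactor and the result is
\begin{equation*}
\mathbb{K}^{\Ang}(x,y;\tau)=\frac{(y/x)^{\beta}}{(2\pi i)^2}\iint u^{\beta-1}v^{\beta}\,P(u,v)\,e^{\Phi(u,v)}\,du\,dv,
\end{equation*}
where $\Phi(u,v)=\tau u-\tfrac{u^2}{2}+\tfrac{x}{u}-\tfrac{\tau}{v}+\tfrac{1}{2v^2}-yv$ and
\begin{equation*}
P(u,v)=y+\bigl(yv-(\beta+1)\bigr)u+\bigl(yv^2-(\beta+2)v-\tau\bigr)u^2.
\end{equation*}
On the other hand, applying the same substitutions to the integrand of \eqref{eq:235}, with $xs-yt=\tfrac{xy}{u}(uv-1)$, $s^{\beta}/t^{\beta}=(y/x)^{\beta}(uv)^{\beta}$, and $ds\,dt=-\tfrac{xy}{u^2}\,du\,dv$, converts it into $\frac{(y/x)^{\beta}}{(2\pi i)^2}\cdot\frac{u^{\beta-1}v^{\beta}}{1-uv}e^{\Phi(u,v)}$, the reversed contour orientation absorbing the $\sign(y)=+1$ prefactor.

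Proposition~\ref{prop:183} therefore reduces to the integral identity
\begin{equation*}
\iint u^{\beta-1}v^{\beta}\,P(u,v)\,e^{\Phi}\,du\,dv=\iint\frac{u^{\beta-1}v^{\beta}}{1-uv}\,e^{\Phi}\,du\,dv,
\end{equation*}
which is the technical heart of the argument and the step where I expect the main obstacle. It is proved by integration by parts, based on $\partial_u e^{\Phi}=(\tau-u-x/u^2)e^{\Phi}$ and $\partial_v e^{\Phi}=(\tau/v^2-1/v^3-y)e^{\Phi}$. Applied to $u^{a+1}e^{\Phi}$ and $v^{b+1}e^{\Phi}$, these yield three-term recurrences relating $\int u^{a+2}$ to $\int u^{a+1},\int u^a,\int u^{a-1}$ (and similarly in $v$), with the parameters $x,y,\tau$ entering the coefficients in exactly the pattern needed to produce $P$. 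A clean organization is to look for explicit rational functions $A(u,v),B(u,v)$ such that
\begin{equation*}
\Bigl[\tfrac{1}{1-uv}-P(u,v)\Bigr]u^{\beta-1}v^{\beta}e^{\Phi}=\partial_u\bigl(u^{\beta-1}v^{\beta}A\,e^{\Phi}\bigr)+\partial_v\bigl(u^{\beta-1}v^{\beta}B\,e^{\Phi}\bigr),
\end{equation*}
reducing the integral identity to a finite polynomial match in $u,v$. The remaining bookkeeping---contour orientations under the substitutions, the principal branches of $t^{\beta}$ and $s^{\beta}$, the $\sign(y)$ factor for $y<0$, and the absolute convergence needed to justify Fubini and the integration by parts---is routine.
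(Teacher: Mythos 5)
Your overall route is the paper's own: start from the concomitant pairing of Proposition \ref{prop:169}, represent $q_0^{(n)}$ and $r_0^{(n)}$ as single contour integrals, multiply out to a double integral with a polynomial factor, and convert that polynomial into the Cauchy-type factor by integration by parts on the phase derivatives. Your substitutions $t=x/u$, $s=yv$ only tidy the integrands; your $P(u,v)$ is exactly the $(u,v)$-form of the bracket in \eqref{eq:906}. The genuine problem sits at what you yourself call the technical heart, and it is twofold. First, the reduced identity you propose to prove is \emph{false as stated}: it is missing a factor $(x-y)$. You took \eqref{eq:187} at face value, but the pairing formula the paper actually uses in its proof is \eqref{eq:746}, namely $\mathbb{K}^{\Ang}(x,y;\tau)=\frac{1}{x-y}\,|y/x|^{\beta}\,\mathcal{B}[q_0,r_0](x,y)$, with a Cauchy prefactor $\frac{1}{x-y}$; its absence in the printed statement \eqref{eq:187} is a slip. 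A diagonal check exposes this: by bilinearity and \eqref{concomitant01}, with $q_0\propto e^{2\beta\pi i}q_1+q_2$ and $r_0=r_2-e^{-2\beta\pi i}r_1$ from \eqref{eq:751}--\eqref{eq:754}, one gets $\mathcal{B}[q_0,r_0](x,x)=\frac{1}{2\pi i}(-1+1)=0$, so the left-hand side of your identity, $\iint u^{\beta-1}v^{\beta}P\,e^{\Phi}$, vanishes at $y=x$, while your right-hand side equals $(2\pi i)^2\,\mathbb{K}^{\Ang}(x,x;\tau)\neq0$ there. The correct reduction is
\begin{equation*}
\iint u^{\beta-1}v^{\beta}\,P(u,v)\,e^{\Phi}\,du\,dv \;=\;(x-y)\iint\frac{u^{\beta-1}v^{\beta}}{1-uv}\,e^{\Phi}\,du\,dv,
\end{equation*}
and this is precisely what the paper proves via the explicit decomposition \eqref{eq:330} and the integrations by parts \eqref{eq:338}--\eqref{eq:340}: after the two derivative terms cancel the $(xs-yt)^{-2}$ term, the factor $(x-y)$ survives in the term $y^{3}t^{3}\frac{x-y}{xs-yt}$. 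Second, even granting the corrected identity, you never carry out this step: no $A,B$ are exhibited and no polynomial match is verified, although this verification is the entire content of the proposition beyond bookkeeping. Had you attempted the match, the exactness ansatz would have failed and revealed the missing $(x-y)$; note also that the ansatz needs the contours to avoid $uv=1$ (i.e.\ $xs=yt$), since otherwise the integration by parts picks up residue contributions exactly as in Proposition \ref{prop:537}.

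A smaller but real gap is the sign reduction: the symmetry $\mathbb{K}^{\Ang}(x,y;\tau)=\mathbb{K}^{\Ang}(-x,-y;-\tau)$ flips \emph{both} arguments, so from $x,y>0$ you reach only $x,y<0$; the mixed cases $x<0<y$ and $y<0<x$ --- where the $\sign(y)$ factor and the absolute values in \eqref{eq:181}, \eqref{eq:184} genuinely matter --- are not covered by it, and ``careful tracking of branches'' is not a substitute for the computation. The paper avoids case splitting entirely by keeping $|x|^{\beta}$ and $|y|^{-\beta-3}$ in the derivative formulas for $q_0^{(n)}$, $r_0^{(n)}$ and running the argument for all real nonzero $x,y$ at once, the factor $\sign(y)=y^{3}/|y|^{3}$ emerging at the end from \eqref{eq:906}.
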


\begin{remark}
In \cite{kmfw09}, \cite{kmfw11} the authors consider a model of $n$ non-intersecting squared Bessel paths, conditioned 
to start at time $t= 0$ in the point $x = a >0$, and to end at time $t = 1$ at $x = 0$, see Figure~\ref{fig:246}.  
For each fixed time $t$ the particles form a MOP ensemble with respect to orthogonality weights involving 
the modified Bessel functions $I_{\beta}$ and $I_{\beta +1}$ for some $\beta > -1$.

\begin{figure}[t]
\centering
\includegraphics[scale = .75, trim = 2cm 10cm 2cm 11.5cm, clip=true]{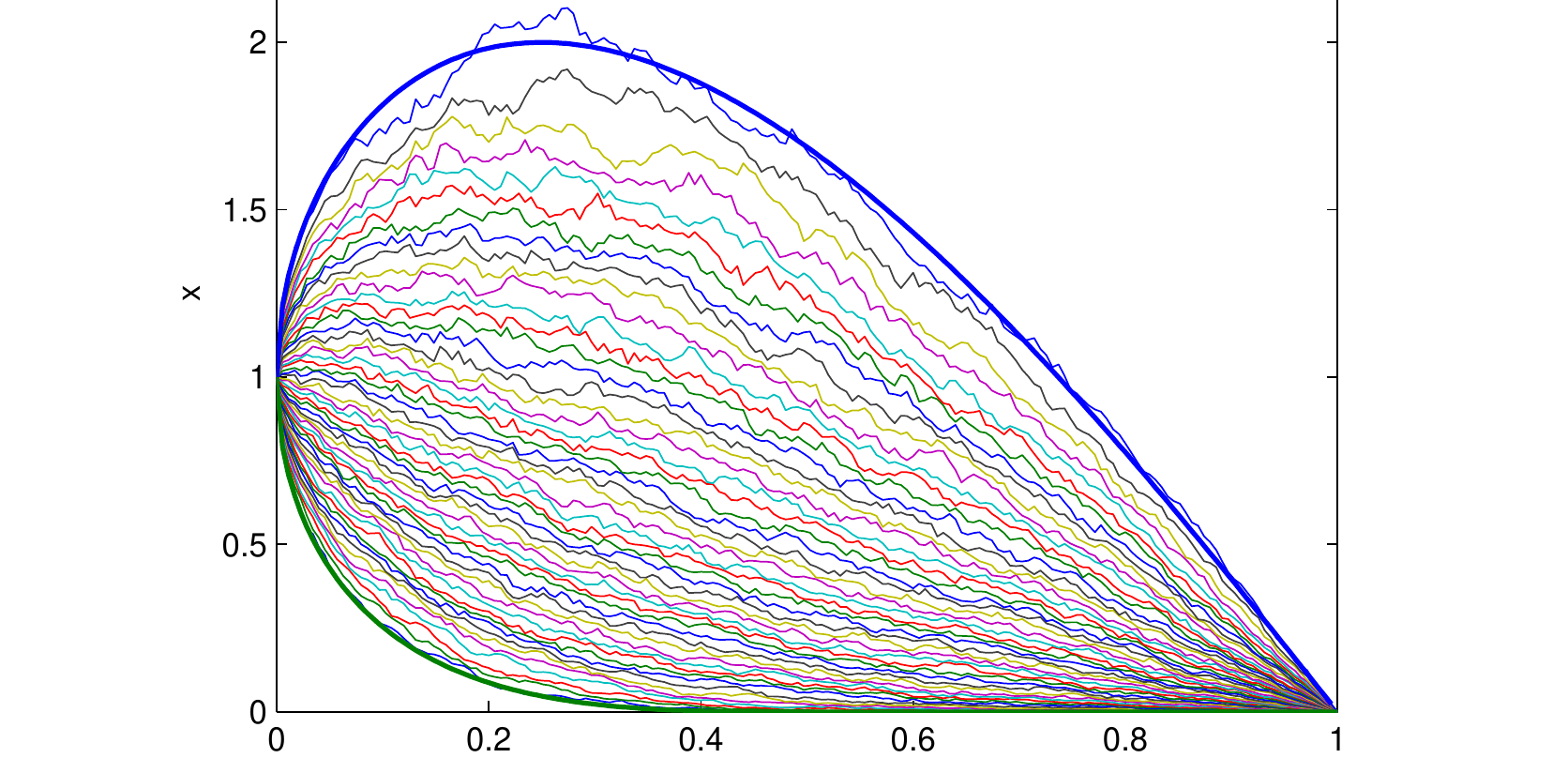}
\caption{Numerical simulation of $50$ non-intersecting squared Bessel paths.  
On the horizontal axis, time is running from $0$ to $1$. The vertical axis has the position $x$. The figure
is taken from \cite{kmfw09}.}
\label{fig:246}
\end{figure}

Associated to each fixed time $t$ one has a limiting mean particle distribution as $n \to \infty$.  
For small values of $t$ the particles are supported on an interval in $\mathbb{R}^{+}$, away from $0$.  
On the other hand, for values of $t$ near $1$ the point $0$ is contained in the support, meaning that 
particles stick to the hard wall in $0$.  There is then a critical time $t = t^{*} \in (0,1)$ at which 
the gap between the particles and the wall in $0$ closes.  This phase transition is similar to the phase transition in Angelesco ensembles.

By taking a double scaling limit around $x = 0$ and $t = t^{*}$ a new kind of universal limiting kernel, 
$K^{crit}$, was obtained \cite[equation (1.18)]{kmfw11}.  The function $K^{crit}$ depends on
 two positive position variables $x$ and $y$, a double scaling parameter $\tau$, and the parameter 
 $\beta$ appearing in the modified Bessel functions.  It turns out that this critical kernel is 
 almost the same as the Angelesco kernel $\mathbb{K}^{\Ang}$.  By applying a simple substitution 
 to the double integral formula for $K^{crit}$ \cite[equation (1.19)]{kmfw11} one obtains the following identity:
\begin{equation} \label{eq:2480}
\mathbb{K}^{\Ang}_{\beta}(x,y;\tau) = \frac{ y^{\beta}}{x^{\beta}}K^{crit}_{\beta}(y,x;\tau), \quad x, y > 0,
\end{equation}
where we used the notation $\mathbb{K}^{\Ang}_{\beta}$ to indicate the dependence of the kernel on the parameter $\beta$.

Although the kernels are different this identity implies an equality at the level 
of the correlation functions.  Hence, after proper identification of the parameters, 
the critical behavior in Angelesco ensembles coincides with the critical behavior in 
the model of non-intersecting squared Bessel paths.  A significant difference between the 
two models is that in the squared Bessel paths model the particles remain on the positive half-line, 
while in our model  they are located on both sides of $0$ and there is interaction between the two groups.
\end{remark}

The proof of Theorem \ref{thm:145} follows from the application of the Deift/Zhou steepest 
descent method on the Riemann-Hilbert (RH) problem for MOPs.  In Section \ref{sec:301} we 
recall the RH formulation for the weights $w_{1}$ and $w_{2}$ \eqref{eq:95} and we give a formula 
for the MOP kernel in terms of the RH problem for MOPs.  Also we recall the RH problem 
for the Angelesco local model parametrix $\Psi$ which is taken from \cite{deku11}.

The Deift/Zhou method consists of a sequence of invertible transformations reducing the 
original RH problem into a normalized RH problem, for which uniform estimates can be made.  
The analysis is the same as in \cite{deku11}.  Instead of giving full 
details we will, in Section \ref{sec:260}, give a quick overview of the transformations 
and describe their effect on the expression for the MOP kernel.

In Section \ref{sec:288} we shall use the series of transformations to show that the 
limit in Theorem \ref{thm:145} indeed holds, and we give expressions for the Angelesco 
kernel $\mathbb{K}^{\Ang}$ in terms of the Angelesco parametrix $\Psi$.

Finally, Section \ref{sec:785} contains the proofs of the two explicit expressions for 
$\mathbb{K}^{\Ang}$ stated in Proposition \ref{prop:169} and Proposition \ref{prop:183}.

\section{Riemann-Hilbert problem}
\label{sec:301}

This section recalls the Riemann-Hilbert (RH) characterization of the relevant multiple 
orthogonal polynomials and the Angelesco local model parametrix.

\subsection{Multiple orthogonal polynomials}
\label{sec:303}
 
We state here the RH problem for MOPs  \cite{vgk01} with respect to the system of weights $(w_{1}, w_{2})$ 
given in \eqref{eq:95} and the multi-index $(n_{1},n_{2}) \in \mathbb{N}^{2}$.  
The endpoint conditions are as in \cite{kmvv04}.

We look for a $3 \times 3$ matrix-valued function $Y$ such that:
\begin{itemize}
\item $Y$ is defined and analytic on $\mathbb{C} \setminus [a,1]$.
\item $Y$ has continuous boundary values $Y_{\pm}$ on $(a,0)$ and $(0,1)$ and they satisfy the jump relation $Y_{+} = Y_{-} J_{Y}$ with $J_{Y}$ given by
\begin{equation} \label{eq231}
J_{Y}(x) = \begin{pmatrix} 1 & w_{1}(x) & w_{2}(x) \\ 0 & 1 & 0 \\ 0 & 0 & 1 \end{pmatrix}, \quad x \in (a,0) \cup (0,1).
\end{equation}
Recall that $w_1(x) \equiv 0$ outside of $[a,0]$ and $w_2(x) \equiv 0$ outside of $[0,1]$.
\item As $z \rightarrow \infty$ we have
\begin{equation}
Y(z) = \left( I + \mathcal{O}\left(\frac{1}{z} \right)\right) \begin{pmatrix} z^{n_1+n_2} & 0 & 0 \\ 0 & z^{-n_1} & 0 \\ 0 & 0 & z^{-n_2} \end{pmatrix}.
\end{equation}
\item $Y$ has the following behavior at the endpoints of the intervals:
\begin{align} \label{eq:240}
& Y(z) = \mathcal{O} \begin{pmatrix} 1 & \epsilon(z) & 1 \\ 1 & \epsilon(z) & 1 \\ 1 & \epsilon(z) & 1 \end{pmatrix}, \textrm{ as } z  \to a, \notag \\
& \qquad \qquad  \textrm{where } \epsilon(z) = \begin{cases}  (z-a)^{\alpha} & \textrm{ if } \alpha < 0, \\ \log (z-a) & \textrm{ if } \alpha =0, \\ 1 & \textrm{ if } \alpha > 0. \end{cases}
\end{align}
\begin{align}\label{eq:244}
& Y(z) = \mathcal{O}\begin{pmatrix}  1 & 1 & \epsilon(z) \\ 1 & 1 & \epsilon(z) \\ 1 & 1 & \epsilon(z) \end{pmatrix}, \textrm{ as } z  \to 1, \notag \\
& \qquad  \qquad \textrm{where } \epsilon(z) = \begin{cases}  (z-1)^{\gamma} & \textrm{ if } \gamma < 0, \\ \log(z-1) & \textrm{ if } \gamma =0, \\ 1 & \textrm{ if } \gamma > 0. \end{cases}
\end{align}
\begin{align} \label{eq:248}
& Y(z) = \mathcal{O} \begin{pmatrix}  1 & \epsilon(z) & \epsilon(z) \\ 1 & \epsilon(z) & \epsilon(z) \\ 1 & \epsilon(z) & \epsilon(z) \end{pmatrix}, \textrm{ as } z  \to 0, \notag \\
& \qquad \qquad \textrm{where } \epsilon(z) = \begin{cases}  z^{\beta} & \textrm{ if } \beta < 0, \\ \log z & \textrm{ if } \beta =0, \\ 1 & \textrm{ if } \beta > 0. \end{cases}
\end{align}
The $\mathcal{O}$-symbol is to be taken entry-wise.
\end{itemize}

This RH problem has a unique solution in terms of MOP with respect to the weights $w_{1}$ and $w_{2}$.
In particular we have that $Y_{11}(z)$ is a monic polynomial of degree $n_1 + n_2$ that
satisfies the multiple orthogonality conditions
\[ \int x^k Y_{11}(x) w_j(x) dx = 0, \qquad k=0, \mathellipsis, n_j-1, \quad j=1,2. \]
For the full expression of $Y$ we refer to \cite{deku11}.  
Most important for the present paper is the following formula for the MOP correlation kernel in terms of $Y$,
see \cite{blku04a,daku04}:
\begin{equation}
	K_{n_{1},n_{2}}(x,y) = 
	\frac{1}{2\pi i(x-y)} \begin{pmatrix} 0 & w_{1}(y) & w_{2}(y) \end{pmatrix} Y_{+}(y)^{-1} 
	Y_{+}(x) 	\begin{pmatrix} 1 \\ 0 \\ 0 \\ \end{pmatrix}. \label{eq:254}
\end{equation}
This formula is valid for $x \neq y \in \mathbb{R}$ with $y \notin \{ a,0,1\}$.  As before it 
is understood that $w_{1}(y) \equiv 0$ for $y$ outside $(a,0)$, and $w_{2}(y) \equiv 0$ for 
$y$ outside $(0,1)$.  The formula \eqref{eq:254} can be extended to the case 
$x= y \notin \{a,0,1 \}$ by l'H\^{o}pital's rule.

\subsection{Angelesco local model parametrix}
\label{sec:356}

At a crucial step in the steepest descent analysis we need to do a local analysis at 
the point $0$.  This step involves certain special functions, which are combined 
into a $3\times 3$ matrix-valued function, called the Angelesco model parametrix 
$\Psi$ and which was introduced in \cite[Section 2.2]{deku11}.  
The function $\Psi$ depends on two parameters $\beta > -1$ 
and $\tau \in \mathbb{R}$.  Since $\beta$ is considered fixed we do not emphasize 
the dependence on $\beta$. We may write $\Psi(z;\tau)$ to emphasize the dependence on $\tau$.

The function $\Psi$ has the following RH characterization:
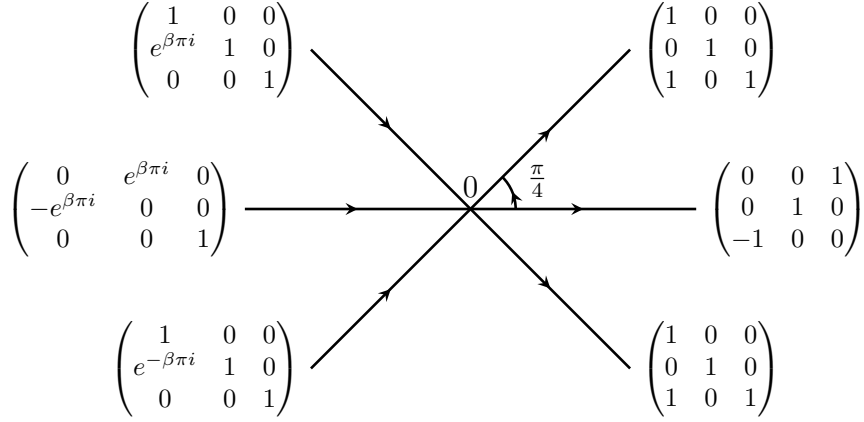
\begin{figure}[t]
\begin{center}
\begin{tikzpicture}[scale = 3,line width = 1, decoration = {markings, mark = at position .5 with {\arrow{stealth};}} ]
\draw [postaction = decorate] (-1,0) -- (0,0);
\draw [postaction = decorate] (0,0) -- (1,0);
\draw [postaction = decorate] (-.707106781,-.707106781) -- (0,0);
\draw [postaction = decorate] (0,0) -- (.707106781,.707106781);
\draw [postaction = decorate] (-.707106781,.707106781) -- (0,0);
\draw [postaction = decorate] (0,0) -- (.707106781,-.707106781);
\draw (0,0) node[above]{$0$};
\draw [postaction = decorate] (.2,0) arc  (0:45:.2);
\draw (.2,0) node[above right]{$\frac{\pi}{4}$};
\draw (-1,0) node[left,font=\footnotesize]{ $\begin{pmatrix} 0 & e^{\beta \pi i} & 0 \\ -e^{\beta \pi i} & 0 & 0 \\ 0 & 0 & 1 \end{pmatrix}$};
\draw (-.707106781,.707106781) node[left,font=\footnotesize]{$\begin{pmatrix} 1 & 0 & 0 \\ e^{\beta \pi i} & 1 & 0 \\ 0 & 0 & 1 \end{pmatrix}$};
\draw (.707106781,.707106781) node[right,font=\footnotesize]{$\begin{pmatrix} 1 & 0 & 0 \\ 0 & 1 & 0 \\ 1 & 0 & 1 \end{pmatrix}$};
\draw (1,0) node[right,font=\footnotesize]{ $\begin{pmatrix} 0 & 0 & 1 \\ 0 & 1 & 0 \\ -1 & 0 & 0 \end{pmatrix}$};
\draw (.707106781,-.707106781) node[right,font=\footnotesize]{$\begin{pmatrix} 1 & 0 & 0 \\ 0 & 1 & 0 \\ 1 & 0 & 1 \end{pmatrix}$};
\draw (-.707106781,-.707106781) node[left,font=\footnotesize]{$\begin{pmatrix} 1 & 0 & 0 \\ e^{-\beta \pi i} & 1 & 0 \\ 0 & 0 & 1 \end{pmatrix}$};
\end{tikzpicture}
\caption{The contour $\Sigma_{\Psi}$ and the jump matrices of $\Psi$.}
\label{figurejumpspsi}
\end{center}
\end{figure}

\begin{itemize}
\item $\Psi$ is defined and analytic on $\mathbb{C} \setminus \Sigma_{\Psi}$ where $\Sigma_{\Psi}$ is 
a contour consisting of six oriented rays through the origin, as shown in Figure \ref{figurejumpspsi}.
\item $\Psi$ has continuous boundary values on $\Sigma_{\Psi} \setminus \{0\}$ that satisfy the jump
condition
\[ \Psi_{+}(z) = \Psi_{-}(z) J_{\Psi}(z) \qquad z \in \Sigma_{\Psi} \setminus \{0\}, \]
where the jump matrices $J_{\Psi}$ are also given in Figure \ref{figurejumpspsi}.
\item Denote $\omega := e^{2\pi i/3}$.  As $z \to \infty$ with $\pm \im z > 0$, we have
\begin{equation}
    \Psi(z) =  \sqrt{ \frac{ 2\pi}{3}} e^{\frac{\tau^{2}}{6}} z^{\frac{\beta}{3}}
    \begin{pmatrix} z^{\frac{1}{3}} & 0 & 0 \\ 0 & 1 & 0 \\ 0 & 0 & z^{-\frac{1}{3}}  \end{pmatrix}
    L_{\pm}
    \left(I + \mathcal{O}\left(z^{-\frac{1}{3}}\right)\right) B_{\pm} e^{\Theta(z; \tau)}, 
    \label{asymptoticexpansionpsi}\end{equation}
where $L_{\pm}$, $B_{\pm}$ and $\Theta(z;\tau)$ are defined by
\begin{equation}  \label{definitionomega}
    \begin{aligned}
    L_+ & := \begin{pmatrix} - \omega^{2} & 1 & \omega \\ 1 & -1 & -1 \\ -\omega & 1 & \omega^{2}   \end{pmatrix},
    & \quad B_+ & := \begin{pmatrix} e^{\frac{\beta\pi i}{3}} & 0 &0 \\ 0 & 1 & 0 \\ 0 & 0 & e^{-\frac{\beta \pi i}{3}} \end{pmatrix}, \\
    L_- & := \begin{pmatrix} \omega & 1 & \omega^{2} \\ -1 & -1 & -1 \\ \omega^{2} & 1 & \omega \end{pmatrix},
    & \quad B_- & :=
     \begin{pmatrix} e^{-\frac{\beta \pi i}{3}} & 0 & 0 \\ 0 & 1 & 0 \\ 0 & 0 & e^{\frac{\beta \pi i}{3}} \end{pmatrix},
      \end{aligned}
\end{equation}
\begin{equation} \label{definitionTheta}
    \Theta(z; \tau) :=
    \begin{cases} \diag \left( \theta_{1}(z;\tau), \theta_{3}(z;\tau), \theta_{2}(z;\tau) \right) & \textrm{ for } \im z > 0, \\
    \diag \left( \theta_{2}(z;\tau), \theta_{3}(z;\tau), \theta_{1}(z;\tau) \right) & \textrm{ for } \im z < 0, \end{cases}
    \end{equation}
and the $\theta_{k}$ are defined by
\begin{equation}
    \theta_{k}(z;\tau) := -\frac{3}{2} \omega^{k} z^{\frac{2}{3}} - \tau \omega^{2k} z^{\frac{1}{3}}
    \qquad \textrm{ for } k =1,2,3. \label{definitionthetak}
\end{equation}
The expansion \eqref{asymptoticexpansionpsi} for $\Psi(z)$ as $z \to \infty$  
is valid uniformly for $\tau$ in a bounded set.
\item Denote by $s_{1}, s_{2}$ the sectors
\begin{align}
s_{1} & := \left \{ z \in \mathbb{C} \, \middle | \, \arg z \in \left( -\frac{3\pi }{4}, - \frac{\pi}{4} \right) \cup \left(\frac{\pi}{4}, \frac{3\pi}{4}\right) \right \}, \notag \\
s_{2} & := \left \{ z \in \mathbb{C} \, \middle | \, \arg z \in \left( -\frac{\pi }{4}, \frac{\pi}{4} \right) \cup \left(\frac{3\pi}{4}, \frac{5\pi}{4}\right) \right \}. \label{eq:4790}
\end{align}

Around $0$ we then have the following estimate:
\begin{multline}
	\Psi(z) = \mathcal{O}\begin{pmatrix} 
	\epsilon_{1}(z) & \epsilon_{2}(z) & \epsilon_{2}(z) \\ \epsilon_{1}(z) & \epsilon_{2}(z) & \epsilon_{2}(z) \\ 
	\epsilon_{1}(z) & \epsilon_{2}(z) & \epsilon_{2}(z) \end{pmatrix}  \textrm{ as }z \rightarrow 0,  \\
\textrm{ with } (\epsilon_{1}(z), \epsilon_{2}(z)) = 
	\begin{cases} (z^{\beta}, z^{\beta}) & \textrm{ for } \beta < 0 , \\ (1, \log z) & \textrm{ for } \beta = 0, z \in s_{1}, \\ 
	(\log z, \log z) & \textrm{ for } \beta = 0, z \in s_{2}, \\ 
	(z^{\beta},1) & \textrm{ for } \beta > 0, z \in s_{1}, \\ 
	(1,1) & \textrm{ for } \beta > 0, z \in s_{2}.  \end{cases} \label{eq479}
\end{multline}
\end{itemize}

Note that the parameter $\beta$ appears in the jump condition and in the behavior around $0$. 
The dependence on $\tau$ is only in the asymptotic condition as $z \rightarrow \infty$.

The RH problem for $\Psi$ has a unique solution.  It is constructed out of solutions of the 
third order linear differential equation
\begin{equation}
zq'''(z) - \beta q''(z) -  \tau q'(z) + q(z) = 0. \label{diffeqq}
\end{equation}
This differential equation \eqref{diffeqq} has solutions in the form of contour integrals
\begin{equation}
q(z) = \int_{\Gamma} t^{-\beta - 3} e^{\frac{\tau}{t} - \frac{1}{2t^{2}} + zt} \,dt , 
\end{equation}
where $\Gamma$ is an appropriate contour so that the integrand  vanishes at the endpoints of the contour $\Gamma$.  Define three contours $\Gamma_{1}, \Gamma_{2}$ and $\Gamma_{3}$ as in Figure \ref{figurecontoursgamma},
and define for $z$ with $\re z > 0$
\begin{equation}
q_{j}(z) := \int_{\Gamma_{j}} t^{-\beta - 3} e^{\frac{\tau}{t} - \frac{1}{2t^{2}} + zt} \,dt, \qquad j=1,2,3, \label{integralrepresentationsq}
\end{equation}
where we choose the branch of $t^{-\beta -3}$ with a cut on the positive real axis, i.e.,
\[ t^{-\beta-3} = |t|^{-\beta-3} e^{(-\beta-3) i \arg t}, \qquad 0 < \arg t < 2 \pi. \]

The integrals \eqref{integralrepresentationsq} only converge for $z$ with
$\re z > 0$, but the functions $q_{j}$ can be continued analytically using contour deformations.
Branch points for the $q$-functions are $0$ and $\infty$ and we take the analytic
continuation to $\mathbb C \setminus (-\infty,0]$, thus with a branch cut on the negative real axis.

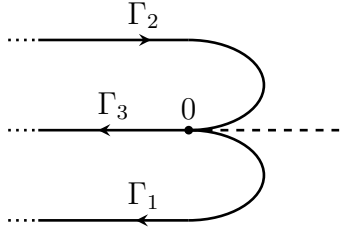
\begin{figure}[t]
\begin{center}
\begin{tikzpicture}[scale = 2,line width = 1]
\filldraw (0,0) circle (.02) node[above]{$0$};
\draw [dashed] (0,0) -- (1,0);
\draw [postaction = decorate, decoration = {markings, mark = at position .5 with {\arrowreversed{stealth};}}] (-1,0) -- (0,0);
\draw (-.5,0) node[above]{$\Gamma_{3}$};
\draw [dotted] (-1.2,0) -- (-1,0);
\draw (0,0) arc (-90:90:.5 and .3);
\draw [postaction = decorate, decoration = {markings, mark = at position .75 with {\arrow{stealth};}}] (-1,.6) -- (0,.6);
\draw [dotted] (-1.2,.6) -- (-1,.6);
\draw (-.3,.6) node[above]{$\Gamma_{2}$};
\draw (0,0) arc (90:-90:.5 and .3);
\draw [postaction = decorate, decoration = {markings, mark = at position .75 with {\arrowreversed{stealth};}}] (-1,-.6) -- (0,-.6);
\draw [dotted] (-1.2,-.6) -- (-1,-.6);
\draw (-.3,-.6) node[above] {$\Gamma_{1}$};
\end{tikzpicture}
\caption{The contours $\Gamma_{1},\Gamma_{2}$ and $\Gamma_{3}$ in the $t$-plane.
    The dashed line denotes the cut of $t^{-\beta -3}$.}
\label{figurecontoursgamma}
\end{center}
\end{figure}

In the upper half-plane the unique solution for $\Psi$ is then given by, see \cite{deku11} for more details,
\begin{equation}
\Psi : = \begin{cases} \begin{pmatrix} e^{2\beta \pi i} q_{1} & e^{\beta \pi i} q_{3}  & q_{2} \\ e^{2\beta \pi i} q_{1}' & e^{\beta \pi i} q_{3}'  & q_{2}' \\ e^{2\beta \pi i} q_{1}'' & e^{\beta \pi i} q_{3}''  & q_{2}'' \end{pmatrix}, & 0 < \arg z < \frac{\pi}{4}, \\ \begin{pmatrix} e^{2\beta \pi i} q_{1} + q_{2} & e^{\beta \pi i} q_{3}  & q_{2} \\ e^{2\beta \pi i} q_{1}' + q_{2}' & e^{\beta \pi i} q_{3}'  & q_{2}' \\ e^{2\beta \pi i} q_{1}'' + q_{2}'' & e^{\beta \pi i} q_{3}''  & q_{2}'' \end{pmatrix}, & \frac{\pi}{4} < \arg z < \frac{3\pi}{4}, \\ \begin{pmatrix} e^{2\beta \pi i} q_{1} + q_{2} -e^{2\beta \pi i} q_{3} & e^{\beta \pi i} q_{3}  & q_{2} \\ e^{2\beta \pi i} q_{1}' + q_{2}' -e^{2\beta \pi i} q_{3}' & e^{\beta \pi i} q_{3}'  & q_{2}' \\ e^{2\beta \pi i} q_{1}'' + q_{2}'' -e^{2\beta \pi i} q_{3}'' & e^{\beta \pi i} q_{3}''  & q_{2}'' \end{pmatrix}, & \frac{3\pi}{4} < \arg z < \pi, \end{cases} \label{definitionpsi1}
\end{equation}
and in the lower half-plane by
\begin{equation}
\Psi := \begin{cases} \begin{pmatrix} q_{2} & e^{\beta \pi i} q_{3} & -e^{2\beta \pi i} q_{1} \\ 
q_{2}' & e^{\beta \pi i} q_{3}' & -e^{2\beta \pi i} q_{1}' \\ 
q_{2}'' & e^{\beta \pi i} q_{3}'' & -e^{2\beta \pi i} q_{1}'' \end{pmatrix}, & -\frac{\pi}{4} < \arg z < 0, \\ 
\begin{pmatrix} q_{2} + e^{2\beta \pi i} q_{1} & e^{\beta \pi i} q_{3} & -e^{2\beta \pi i} q_{1} \\ 
q_{2}' + e^{2\beta \pi i} q_{1}' & e^{\beta \pi i} q_{3}' & -e^{2\beta \pi i} q_{1}' \\ 
q_{2}'' + e^{2\beta \pi i} q_{1}''& e^{\beta \pi i} q_{3}'' & -e^{2\beta \pi i} q_{1}'' \end{pmatrix}, 
& -\frac{3\pi}{4} < \arg z <- \frac{\pi}{4}, \\ 
\begin{pmatrix} e^{2\beta \pi i} q_{1} + q_{2} + q_{3} & e^{\beta \pi i} q_{3} & -e^{2\beta \pi i} q_{1} \\ 
e^{2\beta \pi i} q_{1}' + q_{2}' + q_{3}' & e^{\beta \pi i} q_{3}' & -e^{2\beta \pi i} q_{1}' \\ 
e^{2\beta \pi i} q_{1}'' + q_{2}'' + q_{3}'' & e^{\beta \pi i} q_{3}'' & -e^{2\beta \pi i} q_{1}''  \end{pmatrix}, 
& -\pi < \arg z < - \frac{3\pi}{4}. \end{cases} \label{definitionpsi2}
\end{equation}

\section{Steepest descent transformations}
\label{sec:260}

In this section we apply the Deift/Zhou steepest descent method to the RH problem for $Y$.  
For details of the various transformations we refer to \cite{deku11}.

The Deift/Zhou method consists of a number of invertible transformations reducing the original matrix-valued
function $Y$ to a function $R$ that is uniformly close to the identity matrix.  Of main importance 
is the effect of these transformations on the expression for the MOP kernel $K_{n_{1},n_{2}}(x,y)$, see \eqref{eq:254}, 
with diagonal multi-indices:
\begin{equation}
n_{1} = n = n_{2}.
\end{equation}

The first transformation $Y \mapsto T$ normalizes the RH problem for $Y$ at infinity: the new matrix-valued
function $T$ tends to the identity matrix at infinity.  The transformation uses functions $g_{1}$ and 
$g_{2}$ derived from a modified equilibrium problem with logarithmic potentials.  These $g$-functions
 are analytic on $\mathbb{C} \setminus (-\infty, 1]$. See \cite[Section 4.1]{deku11} for precise
 details on the transformation $Y \mapsto T$. For the MOP kernel we obtain
 from \eqref{eq:254} and formula (4.4) in \cite{deku11}
\begin{multline} \label{eq:KninT}
	K_{n,n}(x,y) = \frac{1}{2\pi i(x-y)} 
	\begin{pmatrix}  0  \\ w_{1}(y) e^{2n(g_{1,+}(y) + l_{1} + \pi i/2)} \\ w_{2}(y) e^{2n(g_{2,+}(y) + l_{2})} 
	\end{pmatrix}^{\transpose} \\ 
 \times T_{+}(y)^{-1} T_{+}(x) \begin{pmatrix} e^{2n(g_{1,+}(x) + g_{2,+}(x))} \\ 0 \\ 0 \end{pmatrix}.
\end{multline}
The constants $l_{1}$ and $l_{2}$ appear in the Euler-Lagrange variational conditions related to the equilibrium problem.

The second transformation $T \mapsto S$ is the opening of the lenses, see \cite[Section 4.2]{deku11}.  
In this transformation the rapidly oscillating jump matrices along $(a,0)$ and $(0,1)$ 
that appeared after the transformation $Y \mapsto T$ are turned into constant jump matrices, with the side effect of creating exponentially small jump matrices on the lips of the lenses.  
The expression for this transformation depends on the location of $z$.  
From now on we restrict ourselves to positive $x$ and $y$, since in the cases $x < 0$ or $y < 0$ the expressions for $K_{n,n}(x,y)$ are different but similar.  Remark that we have $w_{1}(y) = 0$ for $y > 0$.  By \eqref{eq:KninT} and 
formulas (4.14) and (4.5) in \cite{deku11}, we then get for $x, y \in (0,1)$, (note that $w_1(y) = 0$):
\begin{multline} \label{eq:KninS}
	K_{n,n}(x,y) = \frac{1}{2\pi i(x-y)} 
	\begin{pmatrix} -e^{-2n( g_{1,+}(y) + g_{2,+}(y))} \\ 0 \\ w_{2}(y) e^{2n(g_{2,+}(y) + l_{2})} \end{pmatrix}^{\transpose} \\
	\times S_{+}(y)^{-1} S_{+}(x) 
	\begin{pmatrix} e^{2n(g_{1,+}(x) + g_{2,+}(x))} \\ 0 \\ w_{2}(x)^{-1} e^{-2n(g_{2,+}(x) + l_{2})} \end{pmatrix}.
\end{multline}

The next two transformations $S \mapsto R_{0}$ and $R_{0} \mapsto R$ are applied after 
constructing local and global parametrices.  These parametrices are built as approximations 
to the RH problem for $S$, and one can think of $R_{0}$ as the approximation error.  
The approximation around $0$ is not good enough for the further analysis, whence 
the need for a very last transformation $R_{0} \mapsto R$. We refer to \cite[Section 4.4]{deku11}
for precise details and formulas.

The local parametrix $P_0$ around $0$ is constructed on a disk $U_{0}$ centered in zero with 
shrinking radius $n^{-1/2}$, see formulas (4.68) and (4.70) of \cite{deku11}.  
It involves the Angelesco model parametrix $\Psi$ described in 
Section~\ref{sec:356}.  Assume that $x$ and $y$ are positive and belong to the disk $U_{0}$.
Then $S = R_0 P_0$ and inserting this in \eqref{eq:KninS} we obtain after some calculations
\begin{multline} 
K_{n,n}(x,y) = \frac{1}{2\pi i(x-y)} \left|\frac{y^{\beta}}{x^{\beta}} \right|
	\begin{pmatrix} -1 & 0 & 1 \end{pmatrix} \Psi_{+} \left( n^{\frac{3}{2}} f(y); n^{\frac{1}{2}} \tau(y) \right)^{-1} \\
	\times E_{n}(y)^{-1} R_0(y)^{-1} R_0(x) E_n(x)  
		\Psi_{+} \left( n^{\frac{3}{2}} f(x); n^{\frac{1}{2}} \tau(x) \right) 
		\begin{pmatrix}  1 \\ 0 \\ 1 \end{pmatrix}. \label{eq:284}
\end{multline}


For $x < 0$ or $y < 0$ the formula \eqref{eq:284} also holds, but with different row and column vectors, 
as given in the following table:
\begin{equation} \label{eq:550}
\begin{array}{c|c|c} & x < 0 & x> 0 \\ \hline 
	y < 0 & 
	\begin{pmatrix} -1 & 1 & 0 \end{pmatrix}, \begin{pmatrix}  1 \\ 1 \\ 0 \end{pmatrix} & 
	\begin{pmatrix} -1 & 1 & 0 \end{pmatrix}, \begin{pmatrix}  1 \\ 0 \\ 1 \end{pmatrix} \\
	\hline y > 0 & 
	\begin{pmatrix} -1 & 0 & 1 \end{pmatrix}, \begin{pmatrix}  1 \\ 1 \\ 0 \end{pmatrix} & 
	\begin{pmatrix} -1 & 0 & 1 \end{pmatrix}, \begin{pmatrix}  1 \\ 0 \\ 1 \end{pmatrix} 
	\end{array}
\end{equation}

The functions $f$ and $\tau$ appearing in \eqref{eq:284} were defined in \cite{deku11}.  Since we 
do not need to know their full expressions we will restrict ourselves to recalling 
their most important properties. 

These functions are both analytic in a neighborhood of $0$, with $f(0) = 0$.  They depend on $a$, and as $a \rightarrow -1$ one has:
\begin{equation} \label{eq:530}
f'(0;a) = \sqrt{2} + \mathcal{O}(a+1), \qquad \tau(0;a) = \frac{1}{\sqrt{2}} (a+1) + \mathcal{O}(a+1)^{2},
\end{equation}
see \cite[formula (4.76)]{deku11}.
From this it easily follows that:
\begin{lemma}
Let $x \in \mathbb R \setminus \{0\}$ and $ \tau \in \mathbb R$. 
Put 
\begin{equation} x_n := \frac{x}{\sqrt{2} n^{\frac{3}{2}}}, \quad a_n := -1 + \frac{ \sqrt{2} \tau}{n^{\frac{1}{2}}}.\end{equation}
Then as $n \rightarrow \infty$ we have
\begin{align} \label{eq:531}
n^{\frac{3}{2}} f(x_{n};a_{n}) & = x + \mathcal{O}\left(n^{-\frac{1}{2}}\right), \\
n^{\frac{1}{2}} \tau(x_{n};a_{n}) & = \tau + \mathcal{O}\left(n^{-\frac{1}{2}}\right). \label{eq:806}
\end{align}
The $\mathcal{O}$-term is uniform for $x$ in a bounded set.  
\end{lemma}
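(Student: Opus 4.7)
The plan is to Taylor-expand the analytic functions $f(\cdot; a)$ and $\tau(\cdot; a)$ about $z = 0$, substitute the scaled values $x_n$ and $a_n$, and then control the error terms using \eqref{eq:530} together with the analytic dependence on $a$.

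First I would write, since $f$ is analytic in a neighborhood of $0$ with $f(0;a) = 0$,
\begin{equation*}
f(z;a) = f'(0;a)\, z + \mathcal{O}(z^{2})
\end{equation*}
uniformly for $a$ in a neighborhood of $-1$ (the remainder bound is uniform because $f$ is jointly analytic in $(z,a)$). Substituting $z = x_n = x/(\sqrt{2}\, n^{3/2})$ and $a = a_n = -1 + \sqrt{2}\tau/n^{1/2}$, and using $f'(0;a_n) = \sqrt{2} + \mathcal{O}(a_n+1) = \sqrt{2} + \mathcal{O}(n^{-1/2})$ from \eqref{eq:530}, I get
\begin{equation*}
f(x_n;a_n) = \bigl(\sqrt{2} + \mathcal{O}(n^{-1/2})\bigr) \cdot \frac{x}{\sqrt{2}\, n^{3/2}} + \mathcal{O}(n^{-3}) = \frac{x}{n^{3/2}} + \mathcal{O}(n^{-2}),
\end{equation*}
which upon multiplication by $n^{3/2}$ yields \eqref{eq:531}.

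Similarly, I Taylor-expand $\tau(z;a)$ at $z=0$:
\begin{equation*}
\tau(z;a) = \tau(0;a) + \tau'(0;a)\, z + \mathcal{O}(z^{2}),
\end{equation*}
again with the remainder uniform in $a$ near $-1$. Since $\tau$ is jointly analytic in $(z,a)$, the derivative $\tau'(0;a_n)$ is bounded as $n \to \infty$. Plugging in $z = x_n$ and $a = a_n$, using $\tau(0;a_n) = \tfrac{1}{\sqrt{2}}(a_n+1) + \mathcal{O}((a_n+1)^{2}) = \tau/n^{1/2} + \mathcal{O}(n^{-1})$, and noting $\tau'(0;a_n)\, x_n = \mathcal{O}(n^{-3/2})$, I obtain
\begin{equation*}
\tau(x_n;a_n) = \frac{\tau}{n^{1/2}} + \mathcal{O}(n^{-1}),
\end{equation*}
which multiplied by $n^{1/2}$ gives \eqref{eq:806}.

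No real obstacle arises beyond bookkeeping; the only point needing mild care is the assertion that the implied constants in the Taylor remainders are uniform in $a$ for $a$ near $-1$ and in $x$ for $x$ in a bounded set. This follows from the joint analyticity of $f$ and $\tau$ in the two variables $(z,a)$, which is implicit in the construction in \cite{deku11}, so that a Cauchy estimate on a fixed bidisk around $(0,-1)$ provides uniform control of all remainder terms.
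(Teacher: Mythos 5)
Your proof is correct and is essentially the paper's argument: the paper disposes of this lemma in one line, citing \eqref{eq:530} together with the uniform convergence of $f(z;a)$ and $\tau(z;a)$ as $a \to -1$, and your Taylor expansion at $z=0$ with Cauchy-estimate control of the remainders is exactly what that sketch expands to. The only cosmetic difference is that you justify uniformity of the remainders via joint analyticity in $(z,a)$, while the paper invokes uniform convergence in $a$ (which, combined with analyticity in $z$, yields the same Cauchy estimates).
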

\begin{proof}
This follows easily from \eqref{eq:530} and the fact that $f(z; a)$ and $\tau(z, a)$ 
converge uniformly as $a \rightarrow -1$ to $f(z;-1)$ and $\tau(z;-1)$.
\end{proof}

Also we have the following technical estimate:
\begin{proposition}
\label{prop:338}
Let $x, y \in \mathbb R \setminus \{0\}$, $\tau \in \mathbb R$ and put
\begin{equation} \label{eq:xnynan} 
	x_n :=  \frac{x}{\sqrt{2} n^{\frac{3}{2}}}, \quad 
	y_n := \frac{y}{\sqrt{2} n^{\frac{3}{2}}}, \quad a_n := -1 + \frac{ \sqrt{2} \tau}{n^{\frac{1}{2}}}. 
	\end{equation}
Then as $n \rightarrow \infty$ we have
\begin{equation}
	E_{n}(y_{n};a_n)^{-1} R_0(y_{n})^{-1} R_0(x_{n}) E_{n}(x_{n};a_n) = 
	I + \mathcal{O}\left(n^{-\frac{1}{6}}\right). \label{eq:3400}
\end{equation}
The $\mathcal{O}$-term is uniform for $x$ and $y$ in bounded sets.
\end{proposition}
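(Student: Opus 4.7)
The plan is to exploit the further transformation $R_{0}\mapsto R$ from \cite[Section 4.4]{deku11}, which is designed precisely so that $R$ satisfies a genuinely small-norm RH problem with $R(z)=I+\mathcal{O}(n^{-1/6})$ uniformly on $\mathbb{C}$, and then to transfer this smallness through the conjugation by $E_{n}$.

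First I would rewrite the left-hand side of \eqref{eq:3400} as
\[
 \bigl[R_{0}(y_{n})\, E_{n}(y_{n})\bigr]^{-1}\bigl[R_{0}(x_{n})\, E_{n}(x_{n})\bigr].
\]
From the defining relation $S=R_{0}P_{0}=R_{0}E_{n}\Psi$ inside $U_{0}$, the jumps of $S$ on the six rays through the origin are matched by those of $\Psi$, so $R_{0}E_{n}=S\Psi^{-1}$ extends analytically across these rays and is well behaved on $U_{0}\setminus\{0\}$. Substituting the relation $R_{0}=R\,Q_{n}$ valid in a neighbourhood of $0$, with $Q_{n}$ the explicit correction built in \cite[Section 4.4]{deku11} out of the rescaling $z\mapsto n^{3/2}f(z)$, the expression becomes
\[
 \bigl[Q_{n}(y_{n})E_{n}(y_{n})\bigr]^{-1} R(y_{n})^{-1}R(x_{n}) \bigl[Q_{n}(x_{n})E_{n}(x_{n})\bigr].
\]

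The uniform bound $R=I+\mathcal{O}(n^{-1/6})$ at once gives $R(y_{n})^{-1}R(x_{n})=I+\mathcal{O}(n^{-1/6})$. It therefore suffices to show that $M_{n}(z):=Q_{n}(z)E_{n}(z)$ and its inverse are uniformly bounded at the evaluation points $x_{n},y_{n}=\mathcal{O}(n^{-3/2})$, and that $M_{n}(y_{n})^{-1}M_{n}(x_{n})=I+\mathcal{O}(n^{-1/6})$. The uniform boundedness follows from the explicit form of $Q_{n}$ together with the matching condition on $\partial U_{0}$ that was used to define $E_{n}$: the $z^{\beta/3}$ and $\operatorname{diag}(z^{1/3},1,z^{-1/3})$ prefactors in the asymptotic \eqref{asymptoticexpansionpsi} of $\Psi$ are exactly compensated by the corresponding factors of $E_{n}$ built from the outer parametrix. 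The ratio $M_{n}(y_{n})^{-1}M_{n}(x_{n})$ is then controlled by a Taylor expansion of the analytic function $M_{n}$ at $0$, using $|x_{n}-y_{n}|=\mathcal{O}(n^{-3/2})$ against the derivative bound $\|M_{n}'\|=\mathcal{O}(n^{1/2})$ on the shrinking disk $U_{0}$ of radius $n^{-1/2}$.

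The main obstacle is the precise algebraic bookkeeping of $M_{n}=Q_{n}E_{n}$: the outer parametrix entering $E_{n}$ has its own endpoint behaviour at $0$ inherited from the hard-edge/soft-edge structure of the equilibrium problem, and one has to verify that this behaviour is exactly canceled by the companion factor appearing in $Q_{n}$, so that $M_{n}$ is indeed analytic at the origin and obeys the stated derivative bound. Once this verification is in place, combining the three contributions—the factor $I+\mathcal{O}(n^{-1/6})$ coming from $R$, the factor $I+\mathcal{O}(n^{-1})$ coming from the Taylor expansion of $M_{n}$, and the uniform boundedness of $M_{n}^{\pm1}$—yields \eqref{eq:3400}, uniformly for $x,y$ in bounded sets as claimed.
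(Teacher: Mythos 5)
Your overall frame---decompose $R_0 = R\,Q_n$ with $R = I + \mathcal{O}(n^{-1/6})$, then control the conjugation by $E_n$---matches the paper's starting point (indeed $Q_n = I - V_n$ in the notation \eqref{eq:Rz}), but the central claim on which your assembly rests is false, and the paper's proof exists precisely to get around it. You assert that $M_n := Q_n E_n$ and its inverse are uniformly bounded at the evaluation points, on the grounds that the matching used to define $E_n$ compensates the prefactors in \eqref{asymptoticexpansionpsi}. That compensation operates on the boundary circle $|z| \sim n^{-1/2}$; at the evaluation points $x_n, y_n = \mathcal{O}(n^{-3/2})$, deep inside the shrinking disk, $E_n$ genuinely blows up: by \eqref{eq:638} one has $n^{\beta/2}E_n(x_n) = \mathcal{O}(n^{1/2})$ and $n^{-\beta/2}E_n(y_n)^{-1} = \mathcal{O}(n^{1/2})$, with rank-one leading terms \eqref{eq:Enxn}--\eqref{eq:Enyninverse}. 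Multiplying by $Q_n = I - V_n$ does not remove this growth: by \eqref{eq:formVn} and the isotropy $(\sqrt{2}\,i)^2 + 1 + 1 = 0$, the product of $V_n$ with the leading rank-one term of $E_n$ vanishes, so the unbounded part of $E_n$ passes through untouched. With the true sizes your argument fails quantitatively in two places. First, the sandwiched factor $R(y_n)^{-1}R(x_n) = I + \mathcal{O}(n^{-1/6})$, conjugated by two factors of size $n^{1/2}$, produces a divergent error $\mathcal{O}(n^{5/6})$; the paper avoids this by proving the sharper estimate \eqref{eq:620}, $R(y_n)^{-1}R(x_n) = I + \mathcal{O}(n^{-7/6})$, which exploits $|x_n - y_n| = \mathcal{O}(n^{-3/2})$ (you invoke this closeness for $M_n$ but not where it is actually needed). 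Second, your Taylor step for $M_n(y_n)^{-1}M_n(x_n)$ uses a derivative bound $\|M_n'\| = \mathcal{O}(n^{1/2})$ that is unavailable---the term $V_n' E_n$ alone is already $\mathcal{O}(n)$ since $V_n'(z_n) = \mathcal{O}(n^{1/2})$ and $E_n(z_n) = \mathcal{O}(n^{1/2})$---and with correct bounds this step yields only $\mathcal{O}(1)$, exactly the ``straightforward combination would yield an $\mathcal{O}(1)$ bound'' obstruction the paper flags before \eqref{eq:Enxn}.

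The missing idea is structural, not bookkeeping: $V_n$ is a scalar analytic function times the fixed rank-one nilpotent $D_\infty^{-1} v v^{\transpose} D_\infty$ with $v = (\sqrt{2}\,i, 1, -1)^{\transpose}$ and $v^{\transpose}v = 0$, whence $V_n(x)V_n(y) = 0$ \eqref{eq:Vnnilpotent}. After expanding $R_0(y_n)^{-1}R_0(x_n)$ this nilpotency ensures that only the difference $V_n(y_n) - V_n(x_n) = \mathcal{O}(n^{-1})$ survives, and---this is the decisive cancellation---the same isotropic vector $v$ appears in the rank-one leading terms \eqref{eq:Enxn}--\eqref{eq:Enyninverse} of $E_n(x_n)$ and $E_n(y_n)^{-1}$, giving $E_n(y_n)^{-1}\left[V_n(y_n) - V_n(x_n)\right]E_n(x_n) = \mathcal{O}(n^{-1})$ as in \eqref{eq:639} rather than the naive $\mathcal{O}(1)$. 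Combining this with \eqref{eq:620} and $E_n(y_n)^{-1}E_n(x_n) = I + \mathcal{O}(n^{-1/2})$ yields \eqref{eq:3400}; without the nilpotent rank-one structure and this leading-order cancellation, your route cannot get below $\mathcal{O}(1)$, let alone reach $I + \mathcal{O}(n^{-1/6})$.
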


Assuming this proposition we can finish the proof of Theorem \ref{thm:145} as follows.
Let $x, y > 0$ and let $x_n, y_n, a_n$ be as in \eqref{eq:xnynan}. Then for $n$ large
enough we have that $x_n, y_n \in U_0$. Using \eqref{eq:531}, \eqref{eq:806} and \eqref{eq:3400}
in \eqref{eq:284} we then obtain:
\begin{multline} \label{eq:479}
	\frac{1}{\sqrt{2}n^{\frac{3}{2}}} K_{n,n}(x_{n},y_{n};a_{n}) = \\ 
		\frac{1}{2\pi i(x-y)} \left| \frac{ y}{x} \right|^{\beta} 
			\begin{pmatrix} -1 & 0 & 1 \end{pmatrix}  \Psi(y;\tau)^{-1} 
	  \Psi(x;\tau) \begin{pmatrix} 1 \\ 0 \\ 1 \end{pmatrix} + 
	\mathcal{O}\left( \frac{y^{\beta}}{ n^{\frac{1}{6}}} \right).
\end{multline}
This proves Theorem \ref{thm:145} for $x, y > 0$.  The right-hand side of \eqref{eq:479} 
also provides us with an expression for $\mathbb{K}^{\Ang}(x,y; \tau)$ in terms of the Angelesco parametrix $\Psi$.  
For $x< 0$ or $y < 0$ the proof is completely similar and \eqref{eq:479} holds, 
but with different row and column vectors, as given in \eqref{eq:550}.

\section{Proof of Proposition  \ref{prop:338}}
\label{sec:288}

\begin{proof}

The proof of Proposition \ref{prop:338} is rather technical and requires deeper 
knowledge of the structure of the matrix functions $R_{0}$ and $E_{n}$.

From \cite[Section 5.2]{deku11} we have
\begin{equation} \label{eq:Rz} 
	R_0(z) = R(z) (I - V_n(z;a)), \qquad z \in U_0,  
	\end{equation}
where, see formula (5.13) of \cite{deku11},
\begin{equation} \label{eq:Rzestimate} 
	R(z) =  I + \mathcal{O}(n^{-1/6}) \qquad \text{as } n \to \infty, 
	\end{equation}
uniformly in $z$.
The function $V_n$ in \eqref{eq:Rz} is given by
\begin{equation} V_n(z;a)  = Z_n(z;a) - \frac{Z_n^{(0)}(a)}{z} \end{equation}
where $Z_n$ and $Z_{n}^{(0)}(a)$ play a role in the analysis of \cite{deku11}.  The function $Z_{n}$ is analytic on $U_{0} \backslash \{ 0\}$ with a simple pole in $0$, whose residue is $Z_{n}^{(0)}(a)$.  Hence $V_{n}$ is analytic on $U_{0}$.  Additionally, from \cite[equation (4.84)]{deku11} it follows that $V_{n}$ is of the form
\begin{equation}\label{eq:formVn}
V_{n} = \left( \begin{array}{c} \textrm{scalar analytic} \\ \textrm{function in } z \end{array} \right) \times D_{\infty}^{-1} 
\begin{pmatrix} \sqrt{2} i \\ 1 \\ -1 \end{pmatrix} \begin{pmatrix} \sqrt{2}i & 1 & -1 \end{pmatrix} D_{\infty},
\end{equation} for a certain constant diagonal matrix $D_{\infty}$.  As a consequence we have the following
basic property of $V_n$:
\begin{equation} \label{eq:Vnnilpotent}
V_n(x;a) V_n(y;a) = 0, \qquad \text{for every } x,y \in U_0,
\end{equation}


Additionally, $V_n(z;a_n)$ is uniformly bounded on the disk $U_0$.
Since $U_0$ has radius $n^{-1/2}$ this easily implies that $V_n'(z_n;a_n) = \mathcal{O}(n^{1/2})$
if $z_n = \mathcal{O}(n^{-3/2})$, and therefore
\begin{equation} \label{eq:Vndifference} 
	V_n(x_n;a_n) - V_n(y_n;a_n) = \mathcal{O}(n^{-1}).
	\end{equation}
From \eqref{eq:Rz} and the fact that $x_n = \mathcal{O}(n^{-3/2})$, $y_n = \mathcal{O}(n^{-3/2})$, 
it follows that
\begin{equation} \label{eq:620}
	R(y_{n})^{-1} R(x_{n}) = I + \mathcal{O}\left(n^{-\frac{7}{6}}\right).
\end{equation}
Combining \eqref{eq:620}, \eqref{eq:Rzestimate}, \eqref{eq:Vnnilpotent}, 
and \eqref{eq:Vndifference}	 we obtain
\begin{equation} \label{eq:R0ratio} 
	R_0(y_n)^{-1} R_0(x_n) = I + \mathcal{O}(n^{-1}). 
	\end{equation}

The matrix-valued function $E_{n}$ defined in formula (4.80) of \cite{deku11} is analytic and invertible 
on a fixed neighborhood of $0$.  It depends on both $a$ and $n$.  It takes the form
\begin{equation} \label{eq:535}
E_{n}(z;a) = n^{-\frac{\beta}{2}} e^{-\frac{ n \tau(z;a)^{2}}{6}} \widetilde{E}(z;a) 
	\begin{pmatrix}  n^{-\frac{1}{2}} & 0 & 0 \\ 0 & 1 & 0 \\ 0 & 0 & n^{\frac{1}{2}} \end{pmatrix},
\end{equation} 
where $\widetilde{E}$ is independent of $n$, and $\widetilde{E}$ converges uniformly as $a \rightarrow -1$.
Then  
\begin{equation} \label{eq:631}
	\widetilde{E}(y_{n};a_n)^{-1} \widetilde{E}(x_{n};a_n) = I + \mathcal{O}\left( n^{-\frac{3}{2}} \right),
\end{equation} 
and combining with \eqref{eq:535} yields
\begin{equation} \label{eq:634}
E_{n}(y_{n};a_n)^{-1} E_{n}(x_{n};a_n) = I + \mathcal{O}\left( n^{-\frac{1}{2}} \right).
\end{equation}

It also follows from \eqref{eq:535} that 
\begin{equation} \label{eq:638}
	n^{\frac{\beta}{2}} E_{n}(x_{n};a_{n}) = \mathcal{O}\left( n^{\frac{1}{2}}\right), 
	\quad n^{-\frac{\beta}{2}} E_{n}(y_{n};a_{n})^{-1} = \mathcal{O} \left(n^{\frac{1}{2}} \right).
\end{equation}

A straightforward combination of  \eqref{eq:R0ratio}, \eqref{eq:634} and \eqref{eq:638},
would yield an $\mathcal{O}(1)$ bound for the left-hand side of \eqref{eq:3400}.  
We need to go a little deeper into the structure of $E_n$ in order to improve the bound.

From the computations in Section 6 of \cite{deku11} it follows that $E_{n}(x_{n};a_{n})$ has the following 
behavior as $n\rightarrow \infty$:
\begin{multline} \label{eq:Enxn}
E_{n}(x_{n};a_{n}) = C_{1} n^{\frac{1-\beta}{2}} D_{\infty}^{-1} 
\begin{pmatrix}  \sqrt{2}i \\ 1 \\ -1 \end{pmatrix} \begin{pmatrix} 0 & 0 & 1 \end{pmatrix} 
\left(I + \mathcal{O}\left(n^{-\frac{1}{2}}\right)\right),
\end{multline} 
where $C_{1}$ is some non-zero constant, see \cite[equation (6.12)]{deku11}.  
From similar computations one finds for some non-zero $C_{2}$:
\begin{equation} \label{eq:Enyninverse}
E_{n}(y_{n};a_{n})^{-1} = C_{2} n^{\frac{\beta +1}{2}} \begin{pmatrix} 1 \\ 0 \\ 0 \end{pmatrix} 
\begin{pmatrix} \sqrt{2}i & 1 & -1 \end{pmatrix} D_{\infty} \left(I + \mathcal{O}\left(n^{-\frac{1}{2}}\right)\right).
\end{equation}

From these expansions it follows that the leading terms of $E_{n}(y_{n};a_{n})$ and $E_{n}(x_{n};a_{n})$ 
cancel out with $V_{n}$, see \eqref{eq:formVn}.  
In particular it holds by \eqref{eq:formVn}, \eqref{eq:Vndifference}, \eqref{eq:Enxn} and \eqref{eq:Enyninverse} that
\begin{equation} \label{eq:639}
E_{n}(y_{n};a_n)^{-1} [V_{n}(y_{n};a_n) - V_{n}(x_{n};a_n)] E_{n}(x_{n}) = \mathcal{O}(n^{-1}) \textrm{ as } n \rightarrow \infty.
\end{equation}
Taking into account the extra estimate \eqref{eq:639} we can perform the following computation
(we do not write $a_n$, and we use \eqref{eq:Rz}, \eqref{eq:620}, \eqref{eq:Vnnilpotent}, \eqref{eq:634}, and the fact that $V_n(x_n)$ and $V_n(y_n)$
are uniformly bounded)
\begin{multline*}
 E_n(y_n)^{-1} R_0(y_n)^{-1} R_0(x_n) E_n(x_n) \\ 
 	\begin{aligned}
 & =  E_{n}(y_{n})^{-1} (I + V_{n}(y_{n})) R(y_{n})^{-1}R(x_{n}) (I- V_{n}(x_{n})) E_{n}(x_{n})  \\
 & = E_{n}(y_{n})^{-1} (I + V_{n}(y_{n})) 	\left[ I + \mathcal{O} \left( n^{-\frac{7}{6}} \right)\right] 
		(I- V_{n}(x_{n})) E_{n}(x_{n})  \\
 & = E_{n}(y_{n})^{-1} \left[ I + V_{n}(y_{n}) - V_{n}(x_{n}) + \mathcal{O}\left(n^{-\frac{7}{6}} \right)\right]  
	 E_{n}(x_{n}) \\
	& = I + \mathcal{O}\left(n^{-\frac{1}{6}}\right),
	\end{aligned}
\end{multline*}
which completes the proof of Proposition \ref{prop:338}.
\end{proof}

\section{Proof of Propositions \ref{prop:169} and \ref{prop:183}}
\label{sec:785}

In the previous section we derived expressions for $\mathbb{K}^{\Ang}$ in terms of $\Psi$, see 
right-hand side of \eqref{eq:479}.  
In the present section these expressions will be used to obtain the explicit formulas stated in 
Proposition \ref{prop:169} and Proposition \ref{prop:183}.
We follow here the computations in \cite[Section 10.2]{blku07}, see also \cite[Section 8.2]{kmfw11}.

As stated in Section \ref{sec:356} the entries of $\Psi$ are solutions and derivatives of solutions to the third order differential equation
\begin{equation}
zq'''(z) - \beta q''(z) - \tau q'(z) + q(z) = 0. \label{eq:488}
\end{equation}
The solution functions $q_{j}(z)$, $j=1,2,3$ have explicit expressions
\begin{equation}
q_{j}(z) = \int_{\Gamma_{j}} t^{-\beta - 3} e^{\frac{\tau}{t} - \frac{1}{2t^{2}} + zt} \,dt,
\end{equation} 
with contours $\Gamma_{j}$ as shown in Figure \ref{figurecontoursgamma}.

A first step towards the derivation of the double integral formula consists of finding similar explicit expressions for the entries 
of the inverse of $\Psi$.  This involves an \textit{adjoint differential equation} to the one for the $q$-functions \eqref{eq:488}, 
and an associated \textit{bilinear concomitant}, see e.g.~\cite{ber07}, \cite{ince44}.

The solutions of this adjoint differential equation have contour integral representations, and it takes some work to identify the appropriate contours. When the explicit expressions for the entries of $\Psi^{-1}$ are known we will be in a position to prove the double integral formula for $\mathbb{K}^{\Ang}$.

The adjoint differential equation reads
\begin{equation}
zr'''(z) + (\beta +3) r''(z) - \tau r'(z) - r(z)=0. \label{eq:501}
\end{equation}
The associated bilinear concomitant $\mathcal{B}$ is a differential operator defined on general twice differentiable functions $q$ and $r$:
\begin{align} \label{eq:506}
\mathcal{B}[q,r](x,y) & := yr(y) q''(x) - [(\beta +1) r(y) + yr'(y)] q'(x) \notag \\
& \qquad + [yr''(y) + (\beta +2) r'(y) - \tau r(y) ] q(x).
\end{align}
In order to ease the notation we introduce 3 operators $\mathcal{B}_{0}$, $\mathcal{B}_{1}$ and $\mathcal{B}_{2}$
\begin{equation} \begin{array}{rcl} \mathcal{B}_{0}r(z) & := & zr''(z) + (\beta + 2) r'(z) - \tau r(z), \\ \mathcal{B}_{1}r(z) & := & -zr'(z) - (\beta + 1 ) r(z), \\ \mathcal{B}_{2}r(z) &:=  &zr(z).\end{array} \label{eq:509} \end{equation}
So we can write the bilinear concomitant as
\begin{equation} \label{eq:511}
\mathcal{B}[q,r](x,y) = \mathcal{B}_{2}r(y) q''(x) + \mathcal{B}_{1}r(y) q'(x) + \mathcal{B}_{0}r(y) q(x) .
\end{equation}

The central property of this bilinear concomitant is that if $q$ solves the differential equation for $q$ \eqref{eq:488} and $r$ solves the differential equation for $r$ \eqref{eq:501}, then the bilinear concomitant evaluated on the diagonal $\mathcal{B}[q,r](x,x)$ is a constant.  This can be checked by taking the derivative of $\mathcal{B}[q,r](x,x)$ with respect to $x$ and plugging in the differential identities for $q$ and $r$.

The differential equation for $r$ \eqref{eq:488} allows solutions in the form of a contour integral: for an appropriate contour $\widehat{\Gamma}$ the function
\begin{equation} \label{eq:520}
r(z) := \int_{\widehat{\Gamma}} s^{\beta} e^{-\frac{\tau}{s} + \frac{1}{2s^{2}} - zs } \,ds
\end{equation} solves the differential equation \eqref{eq:501}.  By an appropriate contour is meant a contour such that the integrand vanishes exponentially fast at the endpoints.  In this case this implies that $\widehat{\Gamma}$ must end and start in $0$ or at $\infty$.

Now we construct solutions $r_{k}(z), k=1,2,3$ to \eqref{eq:501} such that
\begin{equation} \mathcal{B}[q_{j},r_{k}](z,z) = \delta_{jk} \quad j,k = 1,2,3, \label{concomitant01}\end{equation} where we put the cuts of the functions $r_{k}$, if any, on the negative real axis.  Using these relations we can find expressions 
for $\Psi^{-1}$.  In each sector of the complex plane $\Psi(z)$ is of the form
\begin{equation}
\Psi(z) = \begin{pmatrix} q_{1}(z) & q_{2}(z) & q_{3}(z) \\ q_{1}'(z) & q_{2}'(z) & q_{3}'(z) \\ q_{1}''(z) & q_{2}''(z) & q_{3}''(z) \end{pmatrix} \cdot C
\end{equation} for some invertible matrix $C$.  The inverse of $\Psi$ is then given by
\begin{equation}
\Psi(z)^{-1} = C^{-1} \cdot \begin{pmatrix} \mathcal{B}_{0} r_{1}(z) &  \mathcal{B}_{1} r_{1}(z) & \mathcal{B}_{2} r_{1}(z)  \\ \mathcal{B}_{0} r_{2}(z) &  \mathcal{B}_{1} r_{2}(z) & \mathcal{B}_{2} r_{2}(z) \\ \mathcal{B}_{0} r_{3}(z) &  \mathcal{B}_{1} r_{3}(z) & \mathcal{B}_{2} r_{3}(z) \end{pmatrix}.
\end{equation}
Indeed, the product $\Psi(z)^{-1} \Psi(z)$ then equals
\begin{equation}
C^{-1} \cdot  \begin{pmatrix} \mathcal{B}[q_{1},r_{1}](z,z) & \mathcal{B}[ q_{2},r_{1}](z,z) & \mathcal{B}[q_{3},r_{1}](z,z) \\ {} \mathcal{B}[q_{1},r_{2}](z,z) & \mathcal{B}[q_{2},r_{2}](z,z) & \mathcal{B}[q_{3},r_{2}](z,z) \\ {} \mathcal{B}[q_{1},r_{3}](z,z) & \mathcal{B}[q_{2}, r_{3}](z,z) & \mathcal{B}[q_{3}, r_{3}](z,z) \end{pmatrix}  \cdot C,
\end{equation} which by \eqref{concomitant01} is the identity matrix.

The functions $r_{k}(z)$ can be chosen to be of the form
\begin{equation} \label{eq:532}
r_{k}(z) = \frac{1}{2\pi i} \int_{\widehat{\Gamma}_{k}} s^{\beta} e^{-\frac{\tau}{s} + \frac{1}{2s^{2}} - zs } \,ds,
\end{equation}
with the contours $\widehat{\Gamma}_{k}$ determined by the following proposition:

\begin{proposition} \label{prop:537}
Let $\Omega \subset \mathbb{C}$ be a non-empty open set.  Assume that the following two contour integrals converge for $z \in \Omega$:
\begin{equation}
q(z) = \int_{\Gamma} t^{-\beta-3} e^{\frac{\tau}{t} - \frac{1}{2t^{2}} + zt } \, dt, \quad r(z) = \int_{\widehat{\Gamma}} s^{\beta} e^{-\frac{\tau}{s} + \frac{1}{2s^{2}} - zs } \, ds .\label{eq:539} \end{equation}
Also assume that the cuts of $t^{-\beta -3}$ and $s^{\beta}$ coincide, such that $t^{-\beta -3} s^{\beta} = t^{-3}$ for $t = s$.

If $\Gamma$ and $\widehat{\Gamma}$ do not intersect, or only intersect in $0$, then $\mathcal{B}[q,r](z,z) = 0$ for $z \in \Omega$.  If they however intersect once transversally at a non-zero point, and if $\Gamma$ meets $\widehat{\Gamma}$ on the negative side of $\widehat{\Gamma}$, then
\begin{equation}
\mathcal{B}[q,r](z,z) = 2 \pi i, \qquad z \in \Omega.
\end{equation}
\end{proposition}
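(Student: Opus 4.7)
The plan is to reduce $\mathcal{B}[q,r](z,z)$ to a Stokes calculation on $\Gamma\times\widehat{\Gamma}\subset\mathbb{C}^{2}_{t,s}$, where any contribution must come from a simple pole of an auxiliary $1$-form along the diagonal $\{t=s\}$.

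Substituting the integral representations \eqref{eq:539} and their first two derivatives into \eqref{eq:506} and collecting the six resulting terms, one obtains
\begin{equation*}
\mathcal{B}[q,r](z,z) \;=\; \int_{\Gamma}\int_{\widehat{\Gamma}} P(t,s;z)\, K(t,s;z)\, dt\, ds,
\end{equation*}
where $P(t,s;z)=z(t^{2}+ts+s^{2})-(\beta+1)t-(\beta+2)s-\tau$ and
\[
K(t,s;z)\;=\;t^{-\beta-3}s^{\beta}\,e^{\tau/t-1/(2t^{2})+zt-\tau/s+1/(2s^{2})-zs}.
\]
The coinciding-cuts hypothesis gives $t^{-\beta-3}s^{\beta}=t^{-3}$ on $\{t=s\}$, so $PK$ is in fact holomorphic across the diagonal (away from $t=0$); the Lagrange identity for $L$ and $L^{*}$ shows that this integral is constant in $z\in\Omega$, a convenient consistency check.

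The algebraic heart of the proof is the construction of a meromorphic $1$-form $\omega=F\,dt+G\,ds$ on $\mathbb{C}^{2}\setminus(\{t=0\}\cup\{s=0\})$ with $d\omega=PK\,dt\wedge ds$ and a simple pole only along $\{t=s\}$. A symmetric ansatz $F=-H$, $G=H$ with $H=N(t,s;z)\,K/(t-s)$ reduces the identity $\partial_{t}G-\partial_{s}F=PK$ to an algebraic equation for $N$, solvable by following the procedure used for kindred operators in \cite[Section 10.2]{blku07} and \cite[Section 8.2]{kmfw11}. With $N$ normalised so that $\omega$ contributes exactly $2\pi i$ per transverse intersection, Stokes on $\Gamma\times\widehat{\Gamma}$ — after excising small tubes around the points of $(\Gamma\times\widehat{\Gamma})\cap\{t=s\}$ — yields the result. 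Boundary terms at the endpoints of $\Gamma$ and $\widehat{\Gamma}$ vanish because convergence of the defining integrals for $q$ and $r$ forces $K$ to decay there. If $\Gamma\cap\widehat{\Gamma}\subset\{0\}$ there is no excision — near $0$ the factor $e^{-1/(2t^{2})}$ suppresses any contribution — and $\mathcal{B}[q,r](z,z)=0$; if the contours meet transversally at a single non-zero point, the single excised tube contributes $2\pi i$, the sign being pinned to $+$ by the orientation hypothesis that $\Gamma$ crosses $\widehat{\Gamma}$ on its negative side.

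The main obstacle is the explicit solution of the algebraic equation for $N(t,s;z)$ and the verification that its diagonal residue is normalised to give the stated constants; once that step is in place, the vanishing of boundary terms, the orientation bookkeeping, and the $z$-independence check are all routine.
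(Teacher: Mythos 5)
Your overall skeleton is the same as the paper's: write $\mathcal{B}[q,r](z,z)$ as a double contour integral of $P(t,s;z)K(t,s;z)$ (this matches \eqref{eq:557}), find a primitive so that Stokes/integration by parts kills everything when $\Gamma$ and $\widehat{\Gamma}$ are disjoint, and pick up a residue of $2\pi i$ along the diagonal $\{t=s\}$ at a transversal crossing. But the proposal has a genuine gap: you explicitly defer ``the explicit solution of the algebraic equation for $N(t,s;z)$ and the verification that its diagonal residue is normalised to give the stated constants'' as the main obstacle to be dealt with later. That deferred step \emph{is} the proof; without exhibiting the primitive and computing its residue, nothing has been established, and an appeal to ``the procedure used for kindred operators'' in \cite{blku07}, \cite{kmfw11} is not a substitute. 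Moreover, your equation for $N$ is not algebraic but a first-order PDE, $(\partial_t+\partial_s)N + N\left(\partial_t\log K+\partial_s\log K\right)=(t-s)P$, and your symmetric ansatz $F=-H$, $G=H$ with a single $H=NK/(t-s)$ is not known to admit a rational solution with poles only at $t=0$, $s=0$, and the diagonal: plugging in low-degree candidates leaves uncancelled high-degree terms coming from the $t^{-3}-s^{-3}$ part of $\partial\log K$. The primitive that actually works is \emph{asymmetric}: in the paper's notation (with $e^{\theta_q(t)}e^{\theta_r(s)}=s^3t^3K$, see \eqref{eq:545}--\eqref{eq:548}) it is
\begin{equation*}
\omega=-\frac{s^{3}K}{t-s}\,dt+\frac{t^{3}K}{t-s}\,ds,
\end{equation*}
and the identity $d\omega=PK\,dt\wedge ds$ is precisely the paper's splitting \eqref{eq:758}, whose three $(t-s)^{-2}$ pieces cancel exactly because $\frac{s^2+st+t^2}{s^3t^3(t-s)}=\frac{1}{s^3(t-s)^2}-\frac{1}{t^3(t-s)^2}$.

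The second concrete omission is the normalisation itself. The two numerators $s^3$ and $t^3$ agree on the diagonal, so the residue at a crossing $c\neq 0$ produces the factor $c^{-3}e^{\theta_q(c)}e^{\theta_r(c)}$, and it is exactly the cut-coincidence hypothesis $t^{-\beta-3}s^{\beta}=t^{-3}$ at $t=s$ that makes this equal to $1$, independently of $c$ and of $z$. In your writeup this hypothesis is only used to observe that $PK$ is holomorphic across the diagonal --- a true but irrelevant remark, since the diagonal singularity lives in the primitive $\omega$, not in the integrand --- and the normalisation is postulated (``with $N$ normalised so that $\omega$ contributes exactly $2\pi i$'') rather than derived. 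The paper carries this out concretely: after excising $[c-\varepsilon,c+\varepsilon]$ from $\widehat{\Gamma}$, integration by parts leaves the boundary term $\int_{\Gamma}t^{-3}\bigl[\tfrac{1}{t-s}e^{\theta_r(s)}\bigr]_{s=c+\varepsilon}^{s=c-\varepsilon}e^{\theta_q(t)}\,dt$, which a contour deformation around $c+\varepsilon$ evaluates as $2\pi i\,c^{-3}e^{\theta_q(c)+\theta_r(c)}=2\pi i$, with the orientation hypothesis fixing the sign through which side of $\widehat{\Gamma}$ contributes the residue. To repair your proposal you should drop the symmetric ansatz, write down the asymmetric primitive (equivalently the paper's identity \eqref{eq:758}), verify the exact cancellation, and perform the residue computation at the crossing point instead of asserting it.
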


\begin{proof}
In order to simplify the notations we introduce the two phase functions $\theta_{q}$ and $\theta_{r}$:
\begin{equation} \label{eq:545}
\theta_{q}(t) := \frac{\tau}{t} - \frac{1}{2t^{2}} + zt - \beta \log t,
\end{equation}
\begin{equation} \label{eq:548}
\theta_{r}(s) := -\frac{\tau}{s} + \frac{1}{2s^{2}} - zs + (\beta +3) \log s.
\end{equation}

Using the definition of the bilinear concomitant \eqref{eq:506} and the definitions of $q(z)$ and $r(z)$ \eqref{eq:539} we can write the concomitant $ \mathcal{B}[q,r](z,z)$ of $q$ and $r$ as a double integral:
\begin{equation}
\int_{t \in \Gamma} \int_{s \in \widehat{\Gamma}} \frac{ z(s^{2} + st + t^{2}) - (\beta +2)s - (\beta +1)t - \tau}{s^{3}t^{3}} e^{\theta_{q}(t)} e^{\theta_{r}(s)} \,dsdt. \label{eq:557}
\end{equation}
The fraction in the above integral can be split up as follows:
\begin{align} \label{eq:758}
& \frac{ z(s^{2} + st + t^{2}) - (\beta +2)s - (\beta +1)t - \tau}{s^{3}t^{3}} \notag \\
& \qquad = \frac{1}{t^{3} (t-s)} \frac{ \partial \theta_{r}(s)}{\partial s} + \frac{1}{s^{3} (t-s)} \frac{ \partial \theta_{q}(t)}{\partial t} - \frac{ s^{2} +st +t^{2}}{s^{3} t^{3} (t-s)}.
\end{align}

Suppose that the contours $\Gamma$ and $\widehat{\Gamma}$ do not intersect, or only intersect in $0$.  Then the three terms on the right-hand side of \eqref{eq:758} remain bounded and we can write:
\begin{align}\label{eq:558}
& \mathcal{B}[q,r](z,z) \notag \\
& \quad = \int_{t \in \Gamma} \int_{s \in \widehat{\Gamma}} \frac{ 1}{t^{3} (t-s)} \frac{ \partial \theta_{r}(s)}{\partial s} e^{\theta_{q}(t)} e^{\theta_{r}(s)} \, ds dt \notag \\
& \qquad +  \int_{t \in \Gamma} \int_{s \in \widehat{\Gamma}} \frac{ 1}{s^{3}(t-s)} \frac{\partial \theta_{q}(t)}{\partial t} e^{\theta_{q}(t)} e^{\theta_{r}(s)} \, ds dt \notag \\
& \qquad - \int_{t \in \Gamma} \int_{s \in \widehat{\Gamma}} \frac{s^{2} + st + t^{2}}{s^{3}t^{3} (t-s)}e^{\theta_{q}(t)} e^{\theta_{r}(s)} \, ds dt.
\end{align}

We can apply integration by parts to the first two integrals in \eqref{eq:558}.  The boundary terms vanish.  What remains is:
\begin{align}
\mathcal{B}[q,r](z,z) & = -\int_{t \in \Gamma} \int_{s \in \widehat{\Gamma}} \frac{ 1}{t^{3} (t-s)^{2}} e^{\theta_{q}(t)} e^{\theta_{r}(s)} \, ds dt \notag \\
& \quad + \int_{t \in \Gamma} \int_{s \in \widehat{\Gamma}} \frac{ 1}{s^{3}(t-s)^{2}} e^{\theta_{q}(t)} e^{\theta_{r}(s)} \, ds dt \notag \\
& \quad - \int_{t \in \Gamma} \int_{s \in \widehat{\Gamma}} \frac{t^{3} - s^{3}}{s^{3}t^{3} (t-s)^{2}}e^{\theta_{q}(t)} e^{\theta_{r}(s)} \, ds dt \notag \\
& = 0.
\end{align}

Now suppose that $\Gamma$ and $\widehat{\Gamma}$ intersect in a point $c \neq 0$, with $\Gamma$ going from the right side of $\widehat{\Gamma}$ to the left side.  Then the contours can be deformed such that $\Gamma$ contains the interval $[c-i\delta, c + i\delta]$ while $\widehat{\Gamma}$ contains $[c - \delta, c + \delta]$ for some $\delta > 0$, see Figure \ref{fig:611}.  Define a new contour $\widehat{\Gamma}^{\varepsilon}$ by $\widehat{\Gamma}^{\varepsilon} := \widehat{\Gamma} \setminus [c - \varepsilon, c + \varepsilon]$ with $0< \varepsilon < \delta$.  Then we have using \eqref{eq:557}:
\begin{equation}
\mathcal{B}[q,r](z,z) = \lim_{\varepsilon \rightarrow 0^{+}} \int_{t \in \Gamma} \int_{s \in \widehat{\Gamma}^{\varepsilon}} \left[ \begin{array}{c} z(s^{2} + st + t^{2}) - \tau  \\ + (\beta +2)s - (\beta +1)t \end{array} \right] \frac{ e^{\theta_{q}(t)} e^{\theta_{r}(s)}}{s^{3} t^{3}} \,dsdt.
\end{equation}

Since $\widehat{\Gamma}^{\varepsilon}$ and $\Gamma$ do not intersect we can use the expression in \eqref{eq:558} with $\widehat{\Gamma}$ replaced by $\widehat{\Gamma}^{\varepsilon}$ and apply integration by parts.  However, now the boundary term in $s$ remains:
\begin{equation} \mathcal{B}[q,r](z,z) = \lim_{\varepsilon \rightarrow 0^{+}} \int_{\Gamma} \frac{1}{t^{3}} \left[ \frac{1}{t-s} e^{\theta_{r}(s)} \right]^{s = c - \varepsilon}_{s = c + \varepsilon} e^{\theta_{q}(t)} \, dt .\end{equation}
By continuity we have $e^{\theta_{r}(c \pm \varepsilon)} = e^{\theta_{r}(c)} + \mathcal{O}(\varepsilon)$ as $\varepsilon \rightarrow 0$.  Hence we obtain
\begin{equation} \label{eq:584}
\mathcal{B}[q,r](z,z) =  \lim_{\varepsilon \rightarrow 0} \left( e^{\theta_{r}(c)} + \mathcal{O}(\varepsilon)\right) \int_{\Gamma} \frac{1 }{t^{3}} \left[\frac{1}{t-c+ \varepsilon} - \frac{1}{t-c-\varepsilon} \right]  e^{\theta_{q}(t)} \, dt.
\end{equation}

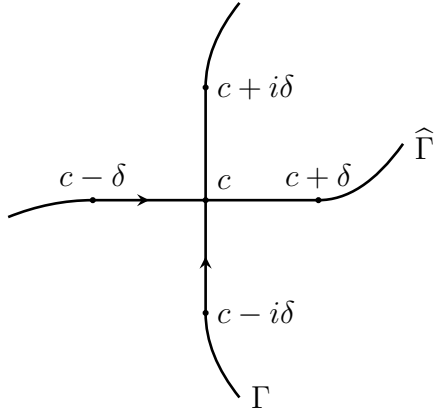
\begin{figure}[t]
\begin{center}
\begin{tikzpicture}[scale = .75, line width = 1,decoration = {markings, mark = at position .25 with {\arrow{stealth};}}]
\draw [postaction = decorate](-2, 0) -- (2,0) ;
\filldraw (-2,0) circle (.03) node[above]{$c- \delta$};
\filldraw (2,0) circle (.03) node[above]{$c+ \delta$};
\draw [postaction = decorate](0,-2) -- (0,2);
\filldraw (0, -2) circle (.03) node[right]{$c-i\delta$};
\filldraw (0, 2) circle (.03) node[right]{$c+i\delta$};
\filldraw (0,0) circle (.03) node[above right] {$c$};
\draw (2,0) .. controls (2.5,0) and (3,.3)..(3.5,1);
\draw (0,2) .. controls (0,2.5) and (.2, 3) .. (.6, 3.5);
\draw (3.5,1) node [right] {$\widehat{\Gamma}$};
\draw (0,-2) .. controls (0,-2.5) and (.2, -3) .. (.6, -3.5);
\draw (-2,0) .. controls (-2.5,0) and (-3,-.1)..(-3.5,-.3);
\draw (.6,-3.5) node [right] {$\Gamma$};
\end{tikzpicture}
\caption{The deformed contours $\Gamma$ containing the interval $[c- i\delta,  c +i \delta]$ and $\widehat{\Gamma}$ containing $[c-\delta, c + \delta]$.}
\label{fig:611}
\end{center}
\end{figure}

Finally we deform the segment $[c- i\delta, c + i \delta] \subset \Gamma$ into the union of a small circle around $c + \varepsilon$ and a semicircle with radius $\delta$ around $c$, see Figure \ref{fig:650}.  The small circle around $c + \varepsilon$ will give us a residue contribution, while the contribution of the semicircle and the remaining part of $\Gamma$ tends to zero as $\varepsilon \rightarrow 0^{+}$.  Remark that the small circle around $c + \varepsilon$ has clockwise orientation.  Then we have as $\varepsilon \rightarrow 0^{+}$:
\begin{equation} \int_{\Gamma} \frac{1 }{t^{3}} \left[\frac{1}{t-c + \varepsilon} - \frac{1}{t-c-\varepsilon} \right]  e^{\theta_{q}(t)} \, dt = 2 \pi i \frac{1}{(c + \varepsilon)^{3}} e^{\theta_{q}(c + \varepsilon)} +\mathcal{O}(\varepsilon) .\end{equation}

Taking the limit $\varepsilon \rightarrow 0^{+}$, and plugging the result into \eqref{eq:584} we get
\begin{equation} \mathcal{B}[q,r](z,z) = 2 \pi i \frac{1}{c^{3}} e^{\theta_{q}(c)}  e^{\theta_{r}(c)}   = 2 \pi i,
\end{equation}
where we used \eqref{eq:545} and \eqref{eq:548}.

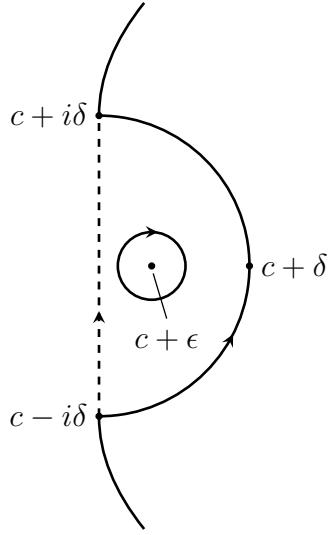
\begin{figure}[t]
\begin{center}
\begin{tikzpicture}[scale = 1, line width = 1]
\draw [line width = 1, dashed, postaction = decorate, decoration = {markings, mark = at position .35 with {\arrow{stealth};}}] (0,-2) -- (0,2);
\filldraw (.7,0) circle (.03);
\draw [line width = .5] (.72,-.1) -- (.9,-.7);
\draw (.9,-.7) node[below] {$c + \epsilon$};
\filldraw (2,0) circle (.03) node[right]{$c+ \delta$};
\draw [postaction = decorate, decoration = {markings, mark = at position .5 with {\arrow{stealth0};}}] (.7,-.45) arc (270:-90:.45);
\draw [postaction =decorate, decoration = {markings, mark = at position .35 with {\arrow{stealth};}}] (0,-2) arc (-90:90:2);
\draw (0,-2) .. controls (0,-2.5) and (.2, -3) .. (.6, -3.5);
\draw (0,2) .. controls (0,2.5) and (.2, 3) .. (.6, 3.5);
\filldraw (0,2) circle (.03) node[left] {$c + i \delta$};
\filldraw (0,-2) circle (.03) node[left] {$c - i \delta$};
\end{tikzpicture}
\caption{The dashed part of the contour $\Gamma$ is deformed into a semicircle of radius $\delta$ around $c$.  Additionally a residue is picked up in $c + \varepsilon$.  The contour $\widehat{\Gamma}$ is not shown here.}
\label{fig:650}
\end{center}
\end{figure}

\end{proof}

Choose contours $\widehat{\Gamma}_{k}$ as in Figure \ref{fig:656} and let the functions $r_{k}$ be defined by
\begin{equation}\label{eq:722}
r_{k}(z) := \frac{1}{2\pi i} \int_{\widehat{\Gamma}_{k}} s^{\beta} e^{-\frac{\tau}{s} + \frac{1}{2s^{2}} -zs } \,ds, \quad k = 1,2,3,
\end{equation} where $s^{\beta}$ has a cut on the positive real axis and $s^{\beta}_{+} = |s|^{\beta}$.  Then it follows from Proposition \ref{prop:537} that the concomitant conditions \eqref{concomitant01} are satisfied.  The integrals \eqref{eq:722} define analytic functions on $\mathbb{C}$ with possible branch points in $0$ and $\infty$ and we put the branch cut on the negative real axis.

\begin{figure}[t]
\begin{center}
\begin{tikzpicture}[scale = 2, line width = 1]
\filldraw (0,0) circle (.02) node[above right]{$0$};
\draw [dashed] (0,0) -- (1.5,0);
\draw [postaction = decorate, decoration = {markings, mark = at position .5 with {\arrowreversed{stealth};}}] (-.3,0) ellipse (.3 and .5);
\draw [postaction = decorate, decoration = {markings, mark = at position .5 with {\arrowreversed{stealth};}}] (0,0) .. controls (0,1) and (.6,.4) .. (1.2,.4);
\draw [postaction = decorate, decoration = {markings, mark = at position .5 with {\arrowreversed{stealth};}}] (0,0) .. controls (0,-1) and (.6,-.4) .. (1.2,-.4);
\draw [dotted] (1.2,.4) -- (1.4,.4);
\draw [dotted] (1.2,-.4) -- (1.4,-.4);
\draw (-.3,.5) node[above]{$\widehat{\Gamma}_{3}$};
\draw (.6,.4) node[left] {$\widehat{\Gamma}_{2}$};
\draw (.6,-.4) node[left]{$\widehat{\Gamma}_{1}$};
\end{tikzpicture}
\caption{The contours $\widehat{\Gamma}_{k}, k =1,2,3$.  The dashed line denotes the cut of $s^{\beta}$.}
\label{fig:656}
\end{center}
\end{figure}
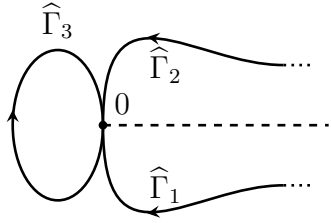

\begin{proof}[Proof of Proposition \ref{prop:169}]
Having found explicit expressions for the entries of $\Psi^{-1}$ we can return to the expression for $\mathbb{K}^{\Ang}$ given in \eqref{eq:479}.  The $1 \times 3$ and $3 \times 1$ matrices pick out certain linear combinations of the $q_{j}$ and $r_{k}$, and we obtain
\begin{equation} \label{eq:746}
\mathbb{K}^{\Ang}(x,y;\tau) = \frac{1}{(x-y)} \left| \frac{y}{x}\right|^{\beta} \mathcal{B} \left[ q_{0}, r_{0} \right](x,y),
\end{equation}
with $q_{0}$ and $r_{0}$ given by
\begin{equation} \label{eq:751}
q_{0}(x) := \frac{1}{2\pi i} \cdot \left\{ \begin{array}{cc} e^{2\beta \pi i} q_{1}(x) + q_{2}(x) & \textrm{ for } x > 0, \\ e^{\beta \pi i} q_{1,+}(x) + e^{-\beta \pi i} q_{2,+}(x) & \textrm{ for } x < 0, \end{array}\right.
\end{equation}
\begin{equation} \label{eq:754}
r_{0}(y) := \left\{ \begin{array}{cc} r_{2}(y)-e^{-2\beta \pi i} r_{1}(y) & \textrm{ for } y > 0, \\ e^{-\beta \pi i} r_{3}(y) & \textrm{ for } y < 0. \end{array}\right.
\end{equation}

The functions $q_{1}$ and $q_{2}$ have branch cuts on the negative real axis.  Hence we have to specify that we need the positive boundary value in \eqref{eq:751} in the case $x < 0$.  Using contour deformation and a simple substitution in the integral one can see that $q_{0}$ and $r_{0}$ are given by the contour integrals \eqref{eq:181} and \eqref{eq:184}.
\end{proof}

Finally we derive from \eqref{eq:746} the double integral formula.

\begin{proof}[Proof of Proposition \ref{prop:183}]

One can check that the derivatives of $q_{0}$ and $r_{0}$ have the following expressions:
\begin{equation}
q_{0}^{(n)}(x) = \frac{1}{2\pi i} x^{2-n} |x|^{\beta} \int_{\Gamma_{0}} t^{-\beta + n - 3} e^{\frac{ \tau x}{t} - \frac{ x^{2}}{2t^{2}} +t} \,dt,
\end{equation}
\begin{equation}
r_{0}^{(n)}(y) = \frac{ (-1)^{n}}{2 \pi i} y^{2-n} |y|^{-\beta -3} \int_{\widehat{\Gamma}_{0}} s^{\beta + n} e^{-\frac{ \tau y}{s} + \frac{y^{2}}{2s^{2}} -s } \,ds,
\end{equation}
for $n \in \mathbb{Z}^{+}$ and $x,y \in \mathbb{R} \setminus \{ 0 \}$.  The contours $\Gamma_{0}$ and $\widehat{\Gamma}_{0}$ are given in Figure \ref{fig:197}.  From \eqref{eq:746} and the definition of the bilinear concomitant \eqref{eq:506} it follows that we can write
\begin{align}
& \mathbb{K}^{\Ang}(x,y;\tau) = \frac{ 1}{4 \pi^{2} (y-x)|y|^{3}} \notag \\
& \qquad \cdot \int_{t \in \Gamma_{0}} \int_{s \in \widehat{\Gamma}_{0}} \frac{s^{\beta}}{t^{\beta +3}} \left[ \begin{array}{c} y^{3}t^{2} - (\beta +1) xy^{2}t + xy^{2}st \\ + x^{2}ys^{2} - (\beta +2) x^{2}ys - \tau x^{2}y^{2} \end{array} \right] \frac{ e^{\frac{ \tau x}{t} - \frac{ x^{2}}{2t^{2}} +t} }{e^{\frac{ \tau y}{s} - \frac{y^{2}}{2s^{2}} +s }} \,ds dt. \label{eq:906}
\end{align}

Introducing the phase functions $\eta_{q}$ and $\eta_{r}$
\begin{equation} \label{eq:911}
\eta_{q}(t) := - \beta \log t + \frac{ \tau x }{t} - \frac{ x^{2}}{2t^{2}} +t,
\end{equation}
\begin{equation} \label{eq:914}
\eta_{r}(s) := (\beta +3) \log s - \frac{ \tau y}{s} + \frac{ y^{2}}{2s^{2}} - s,
\end{equation} we can rewrite the double integral in \eqref{eq:906} as
\begin{equation} \label{eq:917}
\int_{t \in \Gamma_{0}} \int_{s \in \widehat{\Gamma}_{0}} \frac{1}{s^{3}t^{3}} \left[ \begin{array}{c} y^{3}t^{2} - (\beta +1) xy^{2}t + xy^{2}st \\ + x^{2}ys^{2} - (\beta +2) x^{2}ys - \tau x^{2}y^{2} \end{array} \right] e^{\eta_{q}(t)} e^{\eta_{r}(s)} \,ds dt.
\end{equation}

We have the following straightforward identity for the expression inside the square brackets:
\begin{align} \label{eq:330}
& y^{3}t^{2} -(\beta +1) xy^{2}t + xy^{2}st + x^{2}ys^{2} - (\beta +2) x^{2}ys - \tau x^{2} y^{2} \notag \\
& \qquad  = - \frac{ x^{3} ys^{3}}{xs-yt} \frac{\partial \eta_{r}(s)}{\partial s} - \frac{ xy^{3}t^{3} }{xs-yt}\frac{\partial \eta_{q}(t)}{\partial t} \notag \\
& \qquad + \frac{ x^{4}ys^{3} - xy^{4} t^{3}}{(xs-yt)^{2}} + y^{3}t^{3} \frac{ x-y}{xs-yt}.
\end{align}
The terms involving derivatives of the phase functions $\eta_{q}$ and $\eta_{r}$ can be simplified using integration by parts:
\begin{align}
& \int_{t \in \Gamma_{0}} \int_{s \in \widehat{\Gamma}_{0}} \frac{1}{s^{3}t^{3}} \frac{ x^{3}ys^{3} }{xs-yt} \frac{ \partial \eta_{r}(s)}{\partial s} e^{\eta_{q}(t)} e^{\eta_{r}(s)} \,ds dt \notag \\
& \qquad = \int_{t \in \Gamma_{0}} \int_{s \in \widehat{\Gamma}_{0}} \frac{1}{s^{3}t^{3}} \frac{ x^{4}ys^{3} }{(xs-yt)^{2}} e^{\eta_{q}(t)} e^{\eta_{r}(s)} \,ds dt, \label{eq:338} \\
& \int_{t \in \Gamma_{0}} \int_{s \in \widehat{\Gamma}_{0}} \frac{1}{s^{3}t^{3}} \frac{ xy^{3}t^{3} }{xs-yt} \frac{ \partial \eta_{q}(t)}{\partial t} e^{\eta_{q}(t)} e^{\eta_{r}(s)} \,ds dt \notag \\
& \qquad = - \int_{\Gamma_{0}} \int_{\widehat{\Gamma}_{0}} \frac{1}{s^{3}t^{3}} \frac{ xy^{4} t^{3} }{(xs-yt)^{2}} e^{\eta_{q}(t)} e^{\eta_{r}(s)} \,ds dt. \label{eq:340}
\end{align}

By plugging \eqref{eq:338} and \eqref{eq:340} together with \eqref{eq:330} into \eqref{eq:917} we find
\begin{align}
& \int_{t \in \Gamma_{0}} \int_{s \in \widehat{\Gamma}_{0}} \frac{1}{s^{3}t^{3}} \left[ \begin{array}{c} y^{3}t^{2} - (\beta +1) xy^{2}t + xy^{2}st \\ + x^{2}ys^{2} - (\beta +2) x^{2}ys - \tau x^{2}y^{2} \end{array} \right] e^{\eta_{q}(t)} e^{\eta_{r}(s)} \,ds dt \notag \\
& \qquad = (x-y) \int_{t \in \Gamma_{0}} \int_{s \in \widehat{\Gamma}_{0}} \frac{y^{3}}{s^{3}(xs-yt)} e^{\eta_{q}(t)} e^{\eta_{r}(s)} \,ds dt.
\end{align}
From \eqref{eq:906} we then conclude that $\mathbb{K}^{\Ang}(x,y;\tau)$ is given by
\begin{equation}
\mathbb{K}^{\Ang}(x,y;\tau) = \frac{ \sign (y) }{(2\pi i)^{2}} \int_{t \in \Gamma_{0}} \int_{s \in \widehat{\Gamma}_{0}} \frac{s^{\beta}}{t^{\beta }} \frac{1}{xs-yt} \frac{ e^{\frac{ \tau x}{t} - \frac{ x^{2}}{2t^{2}} +t} }{e^{\frac{ \tau y}{s} - \frac{y^{2}}{2s^{2}} +s }} \,ds dt.
\end{equation}

\end{proof}

\subsection{Acknowledgements}

The authors are supported by K.U. Leuven research grant OT/08/33,
and by the Belgian Interuniversity Attraction Pole P06/02.

The second author is also supported by FWO-Flanders projects G.0427.09 and G.0641.11,  
and by grant MTM2008-06689-C02-01 of the Spanish Ministry of Science and Innovation.

\end{document}